\newcommand{\tabincell}[2]{\begin{tabular}{@{}#1@{}}#2\end{tabular}}  
\newcommand{\dualvector}{\textcolor{black}}
\newcommand{\rsred}{\textcolor{black}}
\newcommand{\yu}{\textcolor{black}}
\newcommand{\rrred}{\textcolor{black}}
\newcommand{\revise}{\textcolor{black}}
\newcommand{\serevise}{\textcolor{black}}
\newcommand{\threvise}{\textcolor{black}}
\newcommand{\gred}{\textcolor{black}}
\newcommand{\sred}{\textcolor{black}}
\newcommand{\ggy}{\textcolor{black}}
\newcommand{\pb}{\overline{P}}
\newcommand{\pl}{\underline{P}}
\newcommand{\V}{V}
\newcommand{\Vb}{\overline{V}}
\newcommand\Mm{{\mathcal {M}}}
\def\Xe{{\mathcal{X}}}
\def\R{{\mathbb{R}}}
\def\De{{\mathcal D}}
\newcommand {\Rmnum} [1] {\expandafter\@slowromancap\romannumeral#1\@}
\newtheorem{remark}{Remark}
\newcommand{\ignore}[1]{}
\newcommand{\rere}{\textcolor{black}}
\newtheorem{prop}{Proposition}
\newtheorem{lemma}{Lemma}
\def\Z{{\mathbb{Z}}}
\def\K {{\cal K}}
\def\B{{\mathbb{B}}}
\def\R{{\mathbb{R}}}
\def\Z{{\mathbb{Z}}}
\def\T{{\mathcal T}}
\def\De{{\mathcal D}}
\def\G{{\mathcal G}}
\newtheorem{theorem}{Theorem}
\def\Xe{{\mathcal{X}}}
\def\endthebibliography{%
	\def\@noitemerr{\@latex@warning{Empty `thebibliography' environment}}%
	\endlist
}
\begin{document} \title{{An Extended Integral Unit Commitment Formulation and an Iterative Algorithm for Convex Hull Pricing}}
	
	
	\author{Yanan~Yu, 
		Yongpei~Guan, 
		Yonghong~Chen 
		
		
\thanks{Y. Yu and Y. Guan are with the University of Florida, Gainesville, FL 32611. Y. Chen \sred{(Consultant Advisor)} is with MISO, Carmel, IN 46032. \revise{A preliminary study of this work is shown in an earlier version of this paper~\cite{yyu2019ieee}.}}
	}
	
	\maketitle
	\vspace{-2.0cm}
	
\begin{abstract}
To increase market transparency, independent system operators (ISOs) have been working on minimizing uplift payments based on convex hull pricing theorems. \rere{However,} the large-scale complex systems for ISOs bring computational challenges to the existing convex hull pricing algorithms. In this paper, based on the analysis of \rere{specific} generator features in \rere{the} Midcontinent ISO (MISO) system, besides reviewing integral formulations for several special cases, we develop two integral formulations of a single generator that can capture these features. We then build a compact convex hull pricing formulation based on these integral formulations. \serevise{Meanwhile,} to improve the computational efficiency, we propose innovative iterative algorithms with convergence properties, plus a complementary algorithm, to obtain a convex hull price. \rere{The computational results indicate that \serevise{our approach leads to an exact convex hull price} on MISO instances \ggy{with and without} transmission constraints and the solutions can be obtained within $20$ minutes}. 
\end{abstract}
	
	\begin{IEEEkeywords}
		Convex Hull Pricing, Iterative Algorithm, Integral Formulation.
	\end{IEEEkeywords}

	\section*{Nomenclature}
	\noindent\textit{A. Set and Dimension}
	\begin{description}[\IEEEsetlabelwidth{$(\cdot)_r + \textbf{i}_i$}]
		\item[$\mathcal{G}$] Set of all generators.
		\item[$\mathcal{T}$] Set of operation time span.
		\item[$\mathcal{S}$] Set of start-up states, i.e., $\mathcal{S}$ = \{h(hot), w(warm), c(cold)\}. 
		\item[$\mathcal{B}$] Set of buses.
	\end{description}
	\textit{B. Parameters}
	\begin{description}[\IEEEsetlabelwidth{$(\cdot)_r + \textbf{i}_i$}]
		\item[$D_t$] Load at time $t$ (MW). 
		\item[${\ggy{F_{is}}}$] Start-up cost of generator $i$ in state $s \in \mathcal{S}$ (\$). 
		\item[${\ggy{G_{it}}}$] No load cost of generator $i$ at time $t$ (\$). 
		\item[${\ggy{Q_{it}^{k}}}$] Piecewise linear production cost interception coefficient of generator $i$ \rsred{in} segment $k$ at time $t$ (\$). 
		\item[${\ggy{H_{it}^{k}}}$] Piecewise linear production cost slope coefficient of generator $i$ \rsred{in} segment $k$ at time $t$ (\$/MWh). 
		\item[$\pl_{it}/\pb_{it}$] Min/max generation amount of generator $i$ at time $t$ (MW). 
		\item[$L_{i}/\ell_{i}$] Minimum-up/-down time limit of generator $i$ (h). 
		\item[$\overline{L}_i$] Maximum-up time of generator $i$ (h). 
		\item[$\V_{it}^u/\V_{it}^e$] {Ramp-up/-down rate} of generator $i$ at time $t$ (MW/h). 
		\item[$\Vb_{it}^u/\Vb_{it}^e$] Start-up/shut-down ramp rate of generator $i$ at time $t$ (MW/h). 
		\item[$T_i^{w}/T_i^{c}$] Down-time \rsred{limit} for warm/cold start. \ggy{The start-up cost is hot-start cost if the shut-down time is less than $T_i^w$, \rsred{warm-start} cost if it is longer than $T_i^w$ and less than $T_i^{c}$, and \rsred{cold-start} cost if it is longer than $T_i^c$ (h). }
		\item[\revise{$\underline{T}_i^s$}] \revise{Minimum number of OFF periods for generator $i$ in start-up state $s \in \mathcal{S}$ (h). }
		\item[\revise{$\overline{T}_i^s$}] \revise{Maximum number of OFF periods for generator $i$ in start-up state $s \in \mathcal{S}$ (h). }		
		\item[$S'_i\hspace{-0.03in}\rsred{(t)}\hspace{-0.03in}/\hspace{-0.03in}S_i\hspace{-0.03in}(t)$] Time-dependent start-up/shut-down \rsred{cost} for generator $i$ (\$). 
		\item[$S'_i/S_i$] Constant start-up/shut-down cost for generator $i$ (\$). 
		\item[\revise{$\kappa_{t}/\varpi_{t}$}] \revise{Indicator parameter showing whether min-up/-down time constraint is enforced for a generator at time $t$ (``$1$'' if it is yes and ``$0$'' otherwise).}
		\item[\revise{$a_j/b_j$}] \revise{Slope/intercept of segment $j$ in a piecewise linear function {(\$/MWh / \$). }}
		\item[{$N$}] {Number of pieces in a piecewise linear cost function.}
	\end{description}
	\textit{C. Decision Variables}
	\begin{description}[\IEEEsetlabelwidth{$(\cdot)_r + \textbf{i}_i$}]
		\item[$u_{it}$] On/Off status of \yu{generator} $i$ at time $t$.
		\item[$v_{it}$] Start-up status of \yu{generator} $i$ at time $t$.
		\item[$e_{it}$] Shut-down status of \yu{generator} $i$ at time $t$.
		\item[${\ggy{\delta_{it}^s}}$] Indicator variable, which is \ggy{$1$} if generator $i$ starts at time $t$ in state $s \in \mathcal{S}$.
		\item[$x_{it}$] Generation amount of generator $i$ at time $t$ (MW). 
		\item[$f_{it}$] Production cost of generator $i$ at time $t$ (\$). 
		\item[\revise{$\phi_s$}] \revise{Production cost of a generator at time $s$ (\$).} 
		\item[$w_{t}$] Indicator variable, which is $1$ if an initial-on generator shuts down for the first time at time $t+1$.
		\item[$y_{tk}$] {Indicator variable, which is $1$ if a generator starts up at time $t$ and shuts down at time $k+1$.}
		\item[$z_{tk}$] Indicator variable, which is $1$ if a generator shuts down at time $t+1$ and starts up at time $k$.
		\item[$\theta_{t}$] Indicator variable, which is $1$ if a generator shuts down at time $t+1$ and never starts up again.
		\item[$q_{tk}^s$] Generation amount of a generator at time $s$ \rsred{when it} starts up at time $t$ and shuts down at time $k+1$ (MW).
		\item[$\phi_{tk}^s$] Production cost of a generator at time $s$ \rsred{when} it starts up at time $t$ and shuts down at time $k+1$ (\$). 
		\item[\serevise{$\bar\gamma_{t(b)}$}] \serevise{Convex hull price at time $t$ (in bus $b$) (\$/MWh). } 
		\item[\serevise{$\beta_{t(b)}$}] \serevise{LMP at time $t$ (in bus $b$) (\$/MWh). } 
	\end{description}
	\IEEEpeerreviewmaketitle
	
\section{Introduction}\label{sec:intro}
In the current \rsred{U.S.} day-ahead electricity market operated by ISOs, unit commitment and economic dispatch (UCED) problems are solved to determine the generation amount of each generator and the \yu{market-clearing} price. 
\dualvector{The locational marginal prices (LMP) for market clearance is calculated based on the optimal dual vector corresponding to the system-wide constraints in the ED problem, in which commitment decision variables are fixed at their optimal values.}
Since the commitment \rsred{statuses are} fixed, the LMPs cannot cover the start-up and no-load costs of the generators~\cite{wang2016commitment}. {Accordingly}, uplift payments are \rsred{introduced} to make up the possible loss and motivate the \rsred{generation side market participants} to comply with the commitment schedule provided by ISOs. Since the uplift payment cost is \ggy{not transparent}, ISOs \ggy{aim to} minimize the uplift payment cost \rsred{for their daily operations}\revise{~\cite{PJM2014,uplift2014}}. 

\rsred{To minimize the uplift payments, a convex hull pricing approach has been recently introduced and received significant attention~\cite{gribik2007market,schiro2016convex,hua2017convex,conv2019}. This pricing approach aims to minimize the uplift payments over all possible uniform prices. This approach has \yu{the} advantage to provide an optimal uniform price for the system.} However, on the other hand, it requires to obtain an optimal Lagrangian dual multiplier for a mixed-integer UCED problem and has been computationally intractable for market implementation~\cite{wang2016commitment}. 

To overcome this challenge, researchers and practitioners have explored two different approaches to derive prices that could provide good approximations for the exact convex hull price. One approach is to develop efficient algorithms to solve the Lagrangian dual problem of the UCED problem. For instance, in~\cite{wang2013extreme,wang20131}, the authors developed an extreme-point sub-differential method to strengthen the traditional Lagrangian relaxation approach to obtain the price.  In\revise{~\cite{wang2013subgradient}}, a subgradient simplex cutting plane method is developed to solve the dual problem by iteratively cutting off non-optimal solutions. Those algorithms show some benefits, while they can not guarantee convergence in polynomial time. Meanwhile, researchers turn to the other approach, called the primal approach. \revise{This approach utilizes the property that\yu{,} with the convex hull description of each generator and the convex envelope of the cost function, we can solve a relaxed UCED problem, which is a linear program, and derive the convex hull price based on the optimal dual vector associated with the system-wide constraints~\cite{schiro2016convex,hua2017convex}. The work in~\cite{hua2017convex} gets an approximation of the exact convex hull price by deriving a tight linear program relaxation of the convex hull description for each generator. Readers are also referred to  \cite{G2013,frangioni2015,Jim2018} for other tight formulations.} 

\revise{In practice, MISO has implemented an approximation of convex hull pricing based on Lagrangian relaxation, named extended locational marginal prices (ELMPs)~\cite{wang2016commitment}, and recently the primal formulation in~\cite{hua2017convex} has been shown promising in reducing uplift payment~\cite{ELMP2019}. However, a lot of practical features like ramping rate constraints{~\cite{chen2019}}, time-dependent start-up costs, and so on, are not captured in~\cite{hua2017convex}, which makes the formulation not tight for quite a few generators.}


In this paper, we \revise{follow the spirit of~\cite{schiro2016convex,hua2017convex}} and study the generalization of convex hull pricing problem with the focus on the real-world MISO system, which has more features than those described in the literature, \revise{including time-variant capacity, time-variant ramp-up/-down limit, flexible min-up/-down time limit, max-up time limit constraints, and time-dependent start-up {costs}. Those features are important to improve the model accuracy, while they make it more difficult to describe the convex hull formulation. Therefore, it brings challenges to the primal convex hull formulation framework.} \yu{Besides}, MISO operates one of the world's largest energy markets with more than \$29 billion in annual gross market energy transactions\revise{~\cite{aboutMISO}}. It is required for the market-clearing process to solve large-scale problems with millions of decision variables~\cite{chen2016}. Efficient algorithms are desired to achieve high computational efficiency. The main contributions of this paper are as follows:  
\begin{enumerate}
	\item Based on the specific features of the MISO system{,} including four types of thermal generators, we review convex hull formulations for two types of generators and develop two new convex hull formulations for the other two types of relative more complicated generators. For the class of generators with most complicated physical and operational restrictions, we develop the corresponding integral formulation, which can capture time-variant capacity, time-variant ramp-up/-down limit, flexible min-up/-down time limit, and {max-up time} limit constraints, as well as time-dependent start-up costs altogether, besides traditional physical restrictions. \revise{These features have not been captured altogether by any formulations in the literature.} Our integral formulation could lead to an exact convex hull price for the MISO system by solving a linear program. 
	\item To solve large-size MISO instances, we develop an innovative iterative algorithm, as well as its variant, to speed up the process to solve the problem. We prove that the iterative algorithm converges as the number of iterations increases. Furthermore, we develop a complementary algorithm to utilize several processors to solve the problem together, which can provide MISO the flexibility to get a better solution within a given time limit.
	\item We test the algorithms on MISO instances. The numerical experiment results show that our proposed algorithms lead to an exact convex hull price and a minimum uplift payment for all of the test instances within an acceptable time limit, \serevise{which is one hour, the time limit given for executing pricing run for MISO's day-ahead market,} for the cases with and without transmission constraints. \end{enumerate}

The remainder of this paper is organized as follows. First, \revise{in Section~\ref{sec:cvp}, we review the convex hull pricing framework and primal formulation method.} Next, in Section~\ref{sec:convex}, we adopt the integral formulations in the literature that can be used in \revise{MISO} convex hull pricing problem. Then, in Section~\ref{sec:exMISO}, we develop two new integral formulations that can capture the special characteristics of MISO instances. After that, in Section~\ref{sec:algorithm}, we describe our innovative iterative algorithm and its variants, as well as the convergence proof of this algorithm. In Section~\ref{sec:num}, we report computational results on MISO instances. \revise{Finally, we conclude our research in Section~\ref{sec:conclusion}.}

\section{Convex Hull Pricing and Uplift Payment Minimization}\label{sec:cvp}
\revise{In this section, we briefly review the convex hull pricing problem as described in~\cite{gribik2007market,schiro2016convex,hua2017convex} and give the corresponding primal formulation.}

\revise{The system optimization problem for a {$|\T|$}-period UCED problem run by an ISO can be abstracted as follows:
\begin{subeqnarray} \label{model:UCED}
	& Z^*_{\tiny\mbox{QIP}} = & \min \sum_{i \in \G} C_i(x_i,u_i,v_i,e_i)  \\
	&\text{st.} & Ex \geq p, \slabel{eqn:sys} \\
	&& (x_i, u_i, v_i, e_i) \in  \Xe_i, \forall i \in \G, \slabel{eqn:single}
\end{subeqnarray}
where $C_i(x_i,u_i,v_i,e_i)$ represents the total cost for generator $i$, including start-up, shut-down, no-load, and generation costs, constraints~\eqref{eqn:sys} represent system-wide constraints including load balance and \threvise{angle-eliminated transmission constraints in terms of the shift factors~\cite{misocost}}, constraints~\eqref{eqn:single} represent the feasible region of each generator $i$.}

\revise{After solving the UCED problem, an ISO obtains the generation amount ($\bar{x}$) and commitment status ($\bar{u},\bar{v},\bar{e}$) of each generator. The ISO also determines the price {based on the dual vector} $\gamma$ (referred to as price $\gamma$ for brevity) for all participants. For a given price $\gamma$ offered by the ISO, the self-scheduling profit maximization problem for each generator $i$ can be described as follows:
\begin{subeqnarray} \label{eq:selfmax}
	&Z^i(\gamma) = & \max \gamma^{\intercal}E_ix_i - C_i(x_i,u_i,v_i,e_i)\\
	&\text{st.} & (x_i, u_i, v_i, e_i) \in  \Xe_i.
\end{subeqnarray}}
\hspace*{2mm}The uplift payment is defined as follows~\cite{schiro2016convex}: 
\begin{eqnarray}
	&U = \sum_{i\in\G} Z^i(\gamma) -& (\gamma^{\intercal}E\bar{x} - \sum_{i\in\G}C_i(\bar{x}_i,\bar{u}_i,\bar{v}_i,\bar{e}_i)) \nonumber \\
	&& +(\gamma^{\intercal}E\bar{x} - \gamma^{\intercal}p),  \label{eq:upr}
\end{eqnarray}
where the first two items represent the difference between the maximum profit obtained through self-scheduling given price $\gamma$ and the profit obtained following ISO's schedule, and the last item is defined as ``Financial Transmission Rights (FTR) uplift" in~\cite{gribik2007market}.

\revise{As stated \serevise{in~\cite{hua2017convex}}, the convex hull pricing problem corresponding to problem~\eqref{model:UCED} is presented as the following formulation~\eqref{eq:4}. \serevise{The exact convex hull price \threvise{can be derived as $\bar\gamma^{\intercal} E$ based on} the optimal dual vector $\bar\gamma$ of~\eqref{eqn:sys} in formulation~\eqref{eq:4} \threvise{as introduced in~\cite{misocost} and \cite[Chapter 8.11]{powerlmp} (referred to as convex hull price $\bar\gamma$ for brevity)}}. The minimum uplift payment is guaranteed under the convex hull price $\bar{\gamma}$. Note here, $conv(\Xe_i)$ represents the convex hull formulation for set $\Xe_i$, and $\bar C_i(x_i,u_i,v_i,e_i)$ is the convex envelope for the general cost function of each generator $i$.
\begin{subeqnarray} \label{eq:4}
	& Z^*_{\tiny\mbox{QLP}} = & \min \sum_{i \in \G} \bar C_i(x_i,u_i,v_i,e_i)  \\
	&(\bar\gamma) \quad \text{st.} & \eqref{eqn:sys},  \nonumber \\ 
	&& (x_i, u_i, v_i, e_i) \in  conv(\Xe_i), \forall i \in \G. \slabel{eqn:single1}
\end{subeqnarray}}

\section{Formulations for Several Special Cases} \label{sec:convex}
There are different types of generators in practice within MISO. We first present a traditional 3-bin UC formulation as a base for building our model as follows:
\begin{subeqnarray}\label{eq:1}
\hspace{-0.5in} Z^*_{\tiny\mbox{QIP}} & \hspace{-0.15in} = & \hspace{-0.25in} \min_{f,x,u,v,e,\delta} \sum_{i\in \G}\sum_{t\in \T}\left ( \sum_{s\in S}{\ggy{F_{is}}}\delta_{it}^s + S_ie_{it} + {\ggy{G_{it}}} u_{it} + f_{it}\right ) \slabel{sys:model} \\
\hspace{-0.5in}	\textrm{s.t.} & & ~\eqref{eqn:sys} \nonumber \\
	&&  (f_i,x_i,\delta_i,u_i,v_i,e_i) \in \Xe_i, \forall i \in \G, \slabel{eq:single}
\end{subeqnarray}
where the objective function is to minimize the total cost, {in which the four items represent start-up, shut-down, no-load, and generation costs. {The feasible region $\Xe_i$ of generator $i$ can be expressed as follows:}}
	\begin{subeqnarray} \label{eq:sg}
	&& \hspace{-0.3in}\Xe_i = \{f_i,x_i,\delta_i,u_i,v_i,e_i \in \hspace{-0.03in}\R^{\tiny |\hspace{-0.02in}\T\hspace{-0.02in}|} \hspace{-0.06in} \times\hspace{-0.03in} \R^{\tiny |\hspace{-0.02in}\T\hspace{-0.02in}|} \hspace{-0.06in}\times \hspace{-0.03in} \hspace{-0.03in} \R^{\tiny |\hspace{-0.02in}\T\hspace{-0.02in}|} \hspace{-0.06in} \times \hspace{-0.03in} \B^{\tiny |\hspace{-0.02in}\T\hspace{-0.02in}|} \hspace{-0.06in}\times \hspace{-0.03in}\B^{\tiny |\hspace{-0.02in}\T\hspace{-0.02in}|\hspace{-0.02in}-\hspace{-0.02in}1} \hspace{-0.06in} \times \hspace{-0.03in} \B^{\tiny|\T|\hspace{-0.02in}-\hspace{-0.02in}1}\hspace{-0.03in}|\nonumber\\
	&& f_{it} \geq {\ggy{H_{it}^{k}}} x_{it} - {\ggy{Q_{it}^{k}}} u_{it}, \forall k \in [1,N]_\Z, \forall t \in \T,\slabel{eq:piece}\\
	&& {\ggy{\delta_{it}^s}} \leq \sum_{j = \underline{T}_i^{s}}^{{\overline{T_i}^s}} e_{i(t-j)}, \forall s \in \mathcal{S}/\{\text{c}\}, t \in \T, \slabel{eq:start1}\\
	&& \sum_{s\in \mathcal{S}} {\ggy{\delta_{it}^s}} = v_{it}, t \in \T,\slabel{eq:start2}\\
	&& u_{it} - u_{i(t-1)} = v_{it} - e_{it},\forall t \in [2,|\T|]_\mathbb{Z}, \slabel{eq:logic}\\
	&& \sum_{j=t-L_i+1}^{ t} v_{ij} \leq u_{it},  \forall t\in [L_i+1,|\T|]_\mathbb{Z}, \slabel{eq:minup}\\
	&& \sum_{j=t-\ell_i +1}^{ t} v_{ij} \leq 1-u_{i(t-\ell_i)}, \forall t \in [\ell_i+1, |\T|]_\mathbb{Z},\slabel{eq:mindown}\\
	&& x_{it} \geq  \pl_i u_{it},\forall t \in \T, \slabel{eq:capl}\\
	&& x_{it} \leq \pb_i u_{it}, \forall t \in \T, \slabel{eq:capu}\\
	&& x_{it}-x_{i(t-1)} \leq \V_{i}^u u_{i(t-1)} + \Vb_{i}^{u} v_{it}, \forall t \in \T, \slabel{eq:rampup}\\
	&& x_{i(t-1)}-x_{it} \leq \V_{i}^e u_{it} + \Vb_{i}^{e} e_{it}, \forall t \in \T\}, \slabel{eq:rampdown}
	\end{subeqnarray}
where $\pl_i$,$\pb_i$,$V^u_i$,$\Vb_i^u$,$V^e_i$,$\Vb^e_i$ represent time-invariant parameters. Constraints~\eqref{eq:piece} represent the piecewise linear approximation of the convex cost functions. Constraints~\eqref{eq:start1} and~\eqref{eq:start2} use indicator variables to represent each start-up type, which depend on the number of time periods the generator has been off before it was started-up. Constraints~\eqref{eq:logic} represent the logic relationships. Constraints~\eqref{eq:minup} to \eqref{eq:rampdown} represent the min-up/-down time restrictions (i.e.,~\eqref{eq:minup}-\eqref{eq:mindown}), generation lower and upper bounds (i.e.,~\eqref{eq:capl}-\eqref{eq:capu}), and ramp-up/-down rates (i.e.,~\eqref{eq:rampup}-\eqref{eq:rampdown}). 

Now we introduce convex hull results for two types of generators in MISO below.
\subsubsection{\rere{The set of generators with constant start-up costs and without ramping constraints (labeled as set $\G_1$)}}
Among various types of generators, low capacity fast-start gas generators are relatively easy to model in terms of physical constraints. They can start-up quickly\yu{,} and the ramping capability is larger than the gap between upper and lower generation limits. Also, the start-up costs are constant and not dependent on the down-time periods before \yu{the} start-up. Thus, constraints~\eqref{eq:start1},~\eqref{eq:start2},~\eqref{eq:rampup}, and~\eqref{eq:rampdown} are redundant and~\eqref{eq:sg} can be simplified as follows:
		\begin{eqnarray}\label{mod:1}
		Z^*_{\tiny\mbox{QIP}} & =\hspace{0.1in} & \hspace{-0.2in}\min_{x,u,v,e} \sum_{i\in \G}\sum_{t\in \T}(S'_i v_{it}  + S_ie_{it} + {\ggy{G_{it}}} u_{it} + f_{it}) \label{sys:model2} \\
		\textrm{s.t.}& & ~\eqref{eqn:sys}, (f_i,x_i,u_i,v_i,e_i) \in {\ggy{\Xe^1_i}}, \forall i \in \G_1.\nonumber\\
		{\ggy{\Xe^1_i}} & =\hspace{0.1in}&\hspace{-0.2in} \{f_i,x_i,u_i,v_i,e_i \in \R^{\tiny |\T|} \hspace{-0.05in}\times \hspace{-0.05in}\R^{\tiny |\T|}\hspace{-0.05in} \times\hspace{-0.05in} \B^{\tiny |\T|} \hspace{-0.05in}\times\hspace{-0.05in} \B^{|\T|-1} \hspace{-0.05in}\times\hspace{-0.05in} \B^{|\T|-1}| \nonumber\\
		&& ~\eqref{eq:piece},~\eqref{eq:logic}-\eqref{eq:capu}\}.  
		\end{eqnarray}
The convex hull (as described in~\cite{hua2017convex} and~\cite{Gentile2017}) for ${\ggy{\Xe^1_i}}$ is
\begin{eqnarray}\label{mod:2}
{\ggy{\De^1_i}} & =\hspace{0.1in}&\hspace{-0.2in} \{f_i,x_i,u_i,v_i,e_i \in \R^{\tiny |\T|} \hspace{-0.05in}\times \hspace{-0.05in}\R^{\tiny|\T|}\hspace{-0.05in} \times\hspace{-0.05in} \R^{\tiny|\T|} \hspace{-0.05in}\times\hspace{-0.05in} \R^{|\T|-1} \hspace{-0.05in}\times\hspace{-0.05in} \R^{|\T|-1}| \nonumber\\
		&& ~\eqref{eq:piece},~\eqref{eq:logic}-\eqref{eq:capu},\nonumber\\
		&& {v_{i}} \geq 0, {e_{i}} \geq 0 \slabel{eq:vw}\}.  
\end{eqnarray}
		
\subsubsection{\rere{The set of generators with constant start-up costs and start-up ramping constraints (labeled as set $\G_2$)}}
For another group of gas-fired generators with less restrictive physical constraints, the ramping rates during the stable region are larger than the gap between the upper and lower generation bounds. But the start-up ramping rates ($\Vb_i^u$) are binding. For this case, the convex hull description is (as described in \cite{Gentile2017}) 
\begin{eqnarray}\label{mod:3}
		{\ggy{\De^2_i}} & =\hspace{0.1in}&\hspace{-0.2in} \{f_i,x_i,u_i,v_i,e_i \in \R^{\tiny |\T|} \hspace{-0.05in}\times \hspace{-0.05in}\R^{\tiny |\T|}\hspace{-0.05in} \times\hspace{-0.05in} \R^{\tiny |\T|} \hspace{-0.05in}\times\hspace{-0.05in} \R^{|\T|-1} \hspace{-0.05in}\times\hspace{-0.05in} \R^{|\T|-1}| \nonumber\\
		&& ~\eqref{eq:piece},~\eqref{eq:logic}-\eqref{eq:capu},~\eqref{eq:vw}, \nonumber \\
		&& \hspace{-0.2in}x_{it} \leq  \pb_i u_{it} + (\Vb^u_{i} - \pb_i)v_{it}, \forall t \in [2,|\T|]_\Z \slabel{eq:strampup}\}.
\end{eqnarray}

\section{Formulations for the General MISO Instances} \label{sec:exMISO}
For the generators in MISO, besides special generators $\G_1$ and $\G_2$ as described in~\ref{sec:convex}, there are several generator features which reflect the market needs and capture more details in practice. \revise{In this section, we first introduce three new features in general MISO instances. Then based on generator types, we derive two convex hull descriptions capturing these features.} 

\subsection{\revise{Features in the general MISO instances}}

\subsubsection{\revise{Max-up time limit}}\label{case:a1}
For some generators in the MISO market, there are restrictions on maximum time periods that the generator can stay online because of machine deterioration. To accommodate this, the max-up time constraints can be described as follows:
\begin{equation}
\sum_{j=t+1}^{t+\overline{L}_i} v_{ij} \geq u_{i(t+\overline{L}_i)},\forall t \in \T \slabel{eq:maxon}
\end{equation}

\subsubsection{\revise{Flexible min-up/-down time limit}}\label{case:a2}
In MISO, the min-up/-down time limit is set to be time-variant to resolve offer data conflicts. For example, participants may submit a must-run offer for a generator for hours $1-5$ and $10-24$ with the min-down time limit as $6$ hours. This will force UCED to commit this generator between $6$ and $9$ even if it is \yu{costly}. So MISO developed a set of rules to ignore min-up/-down time limit if there are such conflicts. In this example, the min-down time limit is relaxed to be $1$ between hours $6$ and $9$ so that the generator will not be committed if it is \yu{costly}. \ggy{It will force market participants to submit proper offers and prevent market manipulation. If they do want to run through, they should submit \yu{a} must-run offer for all hours. If they do want MISO to determine on/off in between, they should have \yu{at least} $6$ hours in between.} To accommodate this, the refined min-up/-down time constraints in the $3$-bin formulation (such as $\G_1$ and $\G_2$) can be described as follows (generator index $i$ is omitted for brevity):
\begin{eqnarray}
&& \sum_{j=t-L+1}^{t} \kappa_{t} v_{j} \leq u_{t},  \forall t\in [L+1,|\T|]_\mathbb{Z}, \label{eq:misomu}\\
&& \sum_{j=t-\ell +1}^{\tiny |\mathcal{T}|} \varpi_{t} v_{j} \leq 1-u_{t-\ell},   \forall t \in [\ell+1, |\T|]_\mathbb{Z}.\label{eq:misomd}
\end{eqnarray}

\subsubsection{\revise{Time-variant parameters}}\label{case:a3}
\ggy{In MISO, market participants are allowed to offer capacity and ramp rates varying by the hour.} The time-variant parameters make the convex hull more complicated and have rarely been studied in the literature.

\subsection{\revise{Convex hull results for general MISO instances}}

\subsubsection{\revise{The set of generators in $\G_2$ with max-up time limit constraints described in~\ref{case:a1} (labeled as set $\G_3$)}}{For this type of generators, we derive and prove the convex hull description in Theorem~\ref{thm1}. }
\begin{theorem} \label{thm1}
	The convex hull description for the 3-bin model with max-up time restriction can be described as follows:
	\begin{eqnarray} \label{mod:4}
	{\ggy{\De^3_i}} & =\hspace{0.1in}&\hspace{-0.2in} \{f_i,x_i,u_i,v_i,e_i \in \R^{\tiny |\T|} \hspace{-0.05in}\times \hspace{-0.05in}\R^{\tiny |\T|}\hspace{-0.05in} \times\hspace{-0.05in} \R^{\tiny |\T|} \hspace{-0.05in}\times\hspace{-0.05in} \R^{|\T|-1} \hspace{-0.05in}\times\hspace{-0.05in} \R^{|\T|-1}| \nonumber\\
	&& ~\eqref{eq:piece},~\eqref{eq:logic}-\eqref{eq:capu},~\eqref{eq:vw}-\eqref{eq:maxon} \}. 
	\end{eqnarray}
\end{theorem}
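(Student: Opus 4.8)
The plan is to prove the equality $\De^3_i = conv(\Xe^3_i)$ by two inclusions, leveraging the already-established description $\De^2_i = conv(\Xe^2_i)$ for the set $\G_2$ (the convex hull cited from~\cite{Gentile2017} in Section~\ref{sec:convex}) and treating the max-up time restriction~\eqref{eq:maxon} as a single block of valid inequalities appended to $\De^2_i$. Throughout, let $\Xe^3_i$ denote the mixed-binary set obtained from $\De^3_i$ by re-imposing $u_i,v_i,e_i\in\B$, and similarly $\Xe^2_i$ from $\De^2_i$; let $M$ be the polyhedron cut out by the linear inequalities~\eqref{eq:maxon}, so that $\De^3_i=\De^2_i\cap M$ and $\Xe^3_i=\Xe^2_i\cap M$. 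The easy inclusion is then immediate: since convexification commutes with intersection by a fixed convex set only as a containment, $conv(\Xe^3_i)=conv(\Xe^2_i\cap M)\subseteq conv(\Xe^2_i)\cap M=\De^2_i\cap M=\De^3_i$. Equivalently, every inequality defining $\De^3_i$ is valid for the integer points $\Xe^3_i$, so $\De^3_i$ is a relaxation of $conv(\Xe^3_i)$.

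For the reverse inclusion $\De^3_i\subseteq conv(\Xe^3_i)$ it suffices to show that every extreme point of the polyhedron $\De^3_i$ lies in $\Xe^3_i$, i.e.\ that its commitment components $(u,v,e)$ are $0/1$. The continuous components then follow: once $(u,v,e)$ is fixed and integral, constraints~\eqref{eq:capl}--\eqref{eq:capu} and~\eqref{eq:strampup} decouple across $t$ into per-period intervals for $x_t$, and~\eqref{eq:piece} is the epigraph of a convex piecewise-linear cost, so at an extreme point each pair $(x_t,f_t)$ is a vertex of a two-dimensional region attained at a cost breakpoint, consistent with membership in $\Xe^3_i$. Hence the whole problem reduces to establishing integrality of the commitment part of every extreme point of $\De^3_i=\De^2_i\cap M$.

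Here I would classify the extreme points of $\De^2_i\cap M$. Such a point is either (a) an extreme point of $\De^2_i$ that happens to satisfy~\eqref{eq:maxon}, or (b) a \emph{new} extreme point lying on the boundary of one or more tight max-up inequalities $\sum_{j=t+1}^{t+\overline{L}_i}v_{ij}=u_{i(t+\overline{L}_i)}$. Case (a) is settled immediately, because $\De^2_i=conv(\Xe^2_i)$ has integral vertices and, satisfying~\eqref{eq:maxon}, the point lies in $\Xe^3_i$. Case (b) is the crux and the main obstacle: intersecting a convex hull with a halfspace can in general create fractional vertices, so the content of the theorem is exactly that the max-up faces introduce none. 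I would attack this by contradiction: given a putative extreme point $z^\ast=(f^\ast,x^\ast,u^\ast,v^\ast,e^\ast)$ with a fractional entry in $(u^\ast,v^\ast,e^\ast)$, construct two distinct points $z^{(1)},z^{(2)}\in\De^3_i$ with $z^\ast=\tfrac12(z^{(1)}+z^{(2)})$. The essential requirement on the perturbation is twofold: it must preserve feasibility for all of $\De^2_i$ (logic~\eqref{eq:logic}, min-up/-down~\eqref{eq:minup}--\eqref{eq:mindown}, bounds, start-up ramp) \emph{and} keep every active max-up inequality tight so that $z^{(1)},z^{(2)}$ remain on the same face of $M$. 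Because~\eqref{eq:maxon} couples $v$ over a sliding window, the perturbations inherited from the $\De^2_i$ integrality argument need not respect the second requirement; the delicate step is to support the perturbation on the maximal on/off block surrounding the fractional period and to choose it so that it cancels within each active window.

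A cleaner route to the same conclusion, which I would ultimately prefer, is to exhibit an integral extended formulation. Since~\eqref{eq:maxon} only bounds the length of each maximal on-interval, I would pass to interval/run variables (the $y_{tk}$, $z_{tk}$, $w_t$, $\theta_t$ of the Nomenclature), in which a feasible schedule is a path through a layered directed graph whose on-arcs encode runs of length at most $\overline{L}_i$. The associated flow polytope has a network (totally unimodular) constraint matrix and is therefore integral, and imposing the max-up bound merely deletes the arcs encoding over-long runs, which leaves a network polytope. It would then remain to verify that the projection of this extended polytope onto $(f,x,u,v,e)$ reproduces exactly~\eqref{eq:piece},~\eqref{eq:logic}--\eqref{eq:capu},~\eqref{eq:vw},~\eqref{eq:strampup}, and~\eqref{eq:maxon}, and that integer points project into $\Xe^3_i$; because the projection of an integral polytope is the convex hull of the projected integer points, this yields $\De^3_i=conv(\Xe^3_i)$. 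In this route the main obstacle shifts to the projection/equivalence step, namely checking that eliminating the interval variables introduces no inequalities beyond those listed and omits none of them.
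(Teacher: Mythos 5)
Your setup of the two inclusions is sound, and you have correctly located the crux: the dangerous direction is $\De^3_i \subseteq conv(\Xe^3_i)$, and the danger is exactly that intersecting $conv(\Xe^2_i)$ with the max-up halfspaces~\eqref{eq:maxon} could create fractional vertices. But in both of your routes this crux is announced rather than proved, so the proposal has a genuine gap. In your first route, case~(b) --- a vertex lying on one or more tight max-up faces --- \emph{is} the whole content of the theorem, and your perturbation argument is left as a description of what the perturbation would need to accomplish (``cancel within each active window'') without constructing it; for sliding-window constraints $\sum_{j=t+1}^{t+\overline{L}_i}v_{ij}= u_{i(t+\overline{L}_i)}$ that overlap one another, building a feasibility-preserving perturbation that simultaneously keeps several overlapping active windows tight is precisely the hard combinatorial work, and nothing in the sketch indicates how to do it. In your second route, the deferred step (``verify that the projection \dots reproduces exactly'' the listed inequalities) is again equivalent to the theorem itself, so the difficulty has been relocated, not removed; moreover, the total-unimodularity claim is only plausible for the pure interval variables, not for the full extended system once the dispatch variables $q^s_{tk}$, ramp constraints, and cost-epigraph constraints are coupled to the arcs --- note that the paper, when it treats the harder class $\G_4$ in Theorem~\ref{thm2}, proves integrality of exactly such an interval formulation by LP duality against a dynamic program, not by unimodularity.

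The paper's own proof of Theorem~\ref{thm1} avoids your decomposition $\De^3_i=\De^2_i\cap M$ altogether, and the difference is instructive. It works at the level of the binary polytope first: Proposition~2 of~\cite{Queyranne2017} is cited for the statement that $\{(u,v,e)\,|\,\eqref{eq:logic}-\eqref{eq:mindown},\eqref{eq:vw},\eqref{eq:maxon}\}$ is already the convex hull of the feasible binary commitment vectors when the max-up restriction is included from the start, and then Lemma~4 of~\cite{Gentile2017} is invoked to show that adding the capacity bounds~\eqref{eq:capl}--\eqref{eq:capu}, the start-up ramp constraint~\eqref{eq:strampup}, and the epigraph constraints~\eqref{eq:piece} does not affect integrality. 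In other words, the max-up constraint is absorbed where it is tractable (the pure $0/1$ turn-on/turn-off polytope, where interval arguments are available), and the continuous structure is lifted in afterwards by a lemma that preserves integrality --- the opposite order from yours, which appends~\eqref{eq:maxon} on top of the continuous hull $\De^2_i$ and must then fight the fractional vertices this may create. To complete your proposal without citing those two results, you would effectively have to reprove both of them.
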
	
\begin{proof}
	Based on Proposition $2$ in~\cite{Queyranne2017}, for the min-up/-down time only polytope (without ramping constraints) with \ggy{max-up time} restrictions, the convex hull description of the feasible binary variables ($u,v, e$) is $\{u,v,e \in \R^{\tiny |\T|} \times \R^{\tiny |\T|} \times \R^{\tiny |\T|}|~\eqref{eq:logic}-\eqref{eq:mindown},\eqref{eq:vw},\eqref{eq:maxon}\}$. Then, based on Lemma $4$ in~\cite{Gentile2017}, the addition of capacity constraints~\eqref{eq:capl}-\eqref{eq:capu}, start-up ramping constraints~\eqref{eq:strampup}, and linear function~\eqref{eq:piece} does not affect integrality. Thus, the conclusion holds.
\end{proof}

\subsubsection{\revise{The set of generators with all the additional restrictions described in \ref{case:a1}-\ref{case:a3}(labeled as set $\G_4$)}} \label{subsub:4}

	
\revise{For this type of general generators, the model is built in a high-dimensional space by introducing binary variables $y_{tk}$ ($z_{tk}$) to indicate a generator is ``ON'' (``OFF'') from time periods $t$ to $k$, in order to keep track of its ``ON'' and ``OFF'' intervals. Accordingly, a new decision variable $q^s_{tk}$, corresponding to each ``ON'' interval $[t, k]_\Z$, is introduced to represent the generation amount at time $s, \ t \leq s \leq k$.} Now, we describe the integral formulation (referred \yu{to} as EUC formulation) and give the detailed illustrations in the proof of Theorem~\ref{thm2}. We assume the generator has been initially on for $s_0$ time periods before time $1$. Thus, the generator cannot shut down until time $t_0+1$, with $t_0 = [L-s_0]^+$, due to min-up constraints.  The detailed formulation is shown as follows:
	\begin{subeqnarray} \label{model:LP_D3}
	&\hspace{-0.5in}\min  & \hspace{-0.35in} \sum_{k \in \overline{\K},k <|\T|} \hspace{-0.15in} \textcolor{black}{S}(\ggy{k+s_0})\textcolor{black}{w_k} +\hspace{-0.1in} \sum_{tk \in \overline{\T\K^2},k<|\T|}\hspace{-0.25in} \textcolor{black}{S}(k-t+1) \textcolor{black}{y_{tk}} \nonumber\\
	&& +\hspace{-0.1in} \sum_{kt \in \overline{\K\T}}\hspace{-0.1in} \textcolor{black}{S'}(t-k-1) \textcolor{black}{z_{kt}} +\hspace{-0.1in} \sum_{tk \in \overline{\T\K} }  \sum_{s=t}^{k} \phi_{tk}^s  \slabel{eqn:LP_D3obj}\\
	&\hspace{-0.5in}\mbox{s.t.}  & \sum_{k \in \overline{\K}} \textcolor{black}{w_k} = 1, \slabel{eqn:LP_D21}\\
	&& \hspace{-0.5in}-\textcolor{black}{w_{t'}1_{\{t'\hspace{-0.015in}\in \hspace{-0.015in}\overline{\K}\}}} \hspace{-0.05in} + \hspace{-0.05in} \hspace{-0.2in}\sum_{tk \in \overline{\K\T},t=t'}\hspace{-0.15in} \textcolor{black}{z_{t'k}} - \hspace{-0.22in}\sum_{kt \in \overline{\T\K^2},t=t'}\hspace{-0.2in}\textcolor{black}{y_{kt'}} \hspace{-0.05in} +\hspace{-0.03in} \theta_{t'} \hspace{-0.05in} = \hspace{-0.05in} 0,\hspace{-0.05in} \forall t'\hspace{-0.05in} \in\hspace{-0.05in} [t_0,\hspace{-0.03in} |\T|\hspace{-0.05in}-\hspace{-0.03in}1]_{\Z},\slabel{eqn:LP_D32}\\
	&& \hspace{-0.5in}	\sum_{tk \in \overline{\T\K^2},t=t'}\hspace{-0.2in}\textcolor{black}{y_{t'k}} -\hspace{-0.1in} \sum_{kt \in \overline{\K\T},t=t'}\hspace{-0.2in} \textcolor{black}{z_{kt'}}= 0, \forall t' \in [t_0+\ell_t+1, |\T|]_{\Z},\slabel{eqn:LP_D33}\\
	&& \hspace{-0.5in}\pl_s\textcolor{black}{w_k} \leq q_{tk}^s \leq \pb_s\textcolor{black}{w_k}, \ \forall s \in [t,k]_\Z, \forall tk \in \overline{\T\K^1},  \slabel{eqn:LP_D351}\\
	&& \hspace{-0.5in}\pl_s\textcolor{black}{y_{tk}} \leq q_{tk}^s \leq \pb_s\textcolor{black}{y_{tk}}, \ \forall s \in [t,k]_{\Z}, \forall tk \in \overline{\T\K^2}, \slabel{eqn:LP_D352}\\
	&& \hspace{-0.5in}q_{tk}^{\tiny t} \leq \overline{V}^u_{\tiny t}\textcolor{black}{y_{tk}}, \ \forall tk \in \overline{\T\K^2}, \slabel{eqn:LP_D36}\\
	&& \hspace{-0.5in}q_{tk}^k \leq \Vb^e_k\textcolor{black}{w_k}, \ \forall tk \in \overline{\T\K^1}, k\leq |\T|-1 \slabel{eqn:LP_D361}\\
	&& \hspace{-0.5in}q_{tk}^k \leq \Vb^e_k\textcolor{black}{y_{tk}}, \ \forall tk \in \overline{\T\K^2}, k\leq |\T|-1 \slabel{eqn:LP_D362}\\
	&& \hspace{-0.5in}q_{tk}^{s-1} - q_{tk}^s \leq\V^e_s \textcolor{black}{w_k}, \ q_{tk}^{s} - q_{tk}^{s-1} \leq\V^u_s \textcolor{black}{w_k}, \nonumber\\
	&& \hspace{-0.5in} \forall s \in [t+1,k]_{\Z}, \forall tk \in \overline{\T\K^1}, \slabel{eqn:LP_D37}\\
	&& \hspace{-0.5in}q_{tk}^{s-1} - q_{tk}^s \leq\V^e_s \textcolor{black}{y_{tk}}, \  q_{tk}^{s} - q_{tk}^{s-1} \leq\V^u_s \textcolor{black}{y_{tk}}, \nonumber\\
	&& \hspace{-0.5in} \forall s \in [t+1,k]_{\Z}, \forall tk \in \overline{\T\K^2}, \slabel{eqn:LP_D38}\\
	&& \hspace{-0.5in}\phi_{tk}^s\hspace{-0.03in} -\hspace{-0.03in} a_j q_{tk}^{s}\hspace{-0.03in} \geq\hspace{-0.03in} {b_j} \textcolor{black}{w_k},\hspace{-0.03in} \forall s \hspace{-0.03in} \in\hspace{-0.03in} [t,k]_{\Z}, j\hspace{-0.03in} \in\hspace{-0.03in}  [1,N]_{\Z}, \forall tk \in \overline{\T\K^1}, \slabel{eqn:LP_D391}\\
	&& \hspace{-0.5in}\phi_{tk}^s \hspace{-0.03in}-  \hspace{-0.03in}a_j q_{tk}^{s}\hspace{-0.03in} \geq\hspace{-0.03in} {b_j} \textcolor{black}{y_{tk}}, \hspace{-0.03in}\forall s  \hspace{-0.03in}\in\hspace{-0.03in} [t,k]_{\Z}, j  \hspace{-0.03in}\in \hspace{-0.03in} [1,N]_{\Z}, \forall tk \in \overline{\T\K^2}, \slabel{eqn:LP_D392}\\
	&& \hspace{-0.5in}\textcolor{black}{w}, \textcolor{black}{z},y \geq 0, \theta_t \geq 0, \forall t \in [t_0,|\T|-\ell_t-1]_\Z, \slabel{eqn:LP_D393}
\end{subeqnarray}
where \revise{the objective function is to minimize the total cost, including shut-down, start-up and generation costs, constraints~\eqref{eqn:LP_D21}-\eqref{eqn:LP_D33} keep track of the generator's ``ON'' and ``OFF'' statuses ($1_{\{t'\hspace{-0.015in}\in \hspace{-0.015in}\overline{\K}\}}$ indicates that $w_{t'}$ is included in~\eqref{eqn:LP_D32} only when $t'$ is in set $\overline{\K}$), and constraints~\eqref{eqn:LP_D351}-\eqref{eqn:LP_D392} describe the economic dispatch restrictions in the ``ON'' interval, including the generation upper and lower bounds~\eqref{eqn:LP_D351}-\eqref{eqn:LP_D352}, start-up and shut-down ramping restrictions~\eqref{eqn:LP_D36}-\eqref{eqn:LP_D362}, general ramp-up and ramp-down restrictions~\eqref{eqn:LP_D37}-\eqref{eqn:LP_D38}, and finally the piecewise linear convex function ${f_{tk}^s(q_{tk}^{s})}$ representation~\eqref{eqn:LP_D391}-\eqref{eqn:LP_D392}.} \revise{Meanwhile, considering the new features in terms of max-up and flexible min-up/-down time limit, we define  $\overline{\K} = [t_0,\min\{\overline{L}-s_0,|\T|\}]_\Z$ as the subscript set of variable $w$, $\overline{\T\K}$ ($\overline{\K\T}$) as all possible on-intervals (off-intervals) satisfying min-up and max-up (min-down) time restrictions.} More specifically, we let $\overline{\T\K} = \overline{\T\K^1} \cup \overline{\T\K^2}$, where $\overline{\T\K^1}$ collects all possible first ``ON'' intervals $[t, k]_{\Z}$ with $ t = 1$ and $k \in [t_0+1,\min\{\overline{L}-s_0,|\T|\}]_\Z$ and $\overline{\T\K^2}$ collects all following ``ON'' intervals $[t, k]_{\Z}$, after a shut-down has happened, with $t \in [t_0+\ell_t+1, |\T|]_{\Z}$ and $k \in [\min\{t+L_t-1,|\T|\}, \min\{t+\overline{L}-1,|\T|\}]_{\Z}$. \revise{In this expression, we have i) $L_t = L$ if $\kappa_{t}=1$ and $L_t=1$ otherwise, and ii) $\ell_t = \ell$ if $ \varpi_k = 1$ and $\ell_t = 1$ otherwise to reflect the flexible min-up/-down time limit as described earlier in Section~\ref{case:a2}.} Similarly, we have $ \overline{\K\T}$ collecting all ``OFF'' intervals $[k, t]_{\Z}$ with $k \in [t_0, |\T|-\ell_t -1]_{\Z}$ and $t \in [k+\ell_t+1, |\T|]_{\Z}$. \revise{The new feature in terms of time-variant parameters is considered by allowing parameters $\overline{P}_s,\underline{P}_s,\overline{V}_t^u,\overline{V}_k^e, V_s^e$ to be dynamic in time.}

\begin{theorem}\label{thm2}
The convex hull description for the general single generator MISO UC is as follows:
\begin{eqnarray} \label{mod:5}
		{\ggy{\De^4_i}} & =\hspace{0.1in}&\hspace{-0.2in} \{w_i,y_i,z_i,\theta_i, q_i, \phi_i | ~\eqref{eqn:LP_D21}-\eqref{eqn:LP_D393}\},  
	\end{eqnarray}
\revise{and there exists an optimal solution to~\eqref{model:LP_D3} binary with respect to decision variables $w$, $y$, and $z$.}
\end{theorem}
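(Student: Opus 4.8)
The plan is to recognize formulation~\eqref{model:LP_D3} as a network-flow reformulation in which the binary variables trace a single unit of flow through the generator's ON/OFF history, while the continuous dispatch variables are attached homogeneously to each ON-interval arc. First I would isolate the constraints~\eqref{eqn:LP_D21}--\eqref{eqn:LP_D33} together with the nonnegativity~\eqref{eqn:LP_D393} and argue that their coefficient matrix, restricted to $(w,y,z,\theta)$, is the node--arc incidence matrix of a directed acyclic graph, hence totally unimodular. Concretely, I would build a graph whose nodes are the candidate shutdown and startup instants: constraint~\eqref{eqn:LP_D21} injects one unit of flow from the initial-ON source, each $w_k$ is an arc from that source into shutdown node $k$, each $y_{tk}$ is an arc from startup node $t$ to shutdown node $k$, each $z_{tk}$ is an arc from shutdown node $t$ to startup node $k$, and each $\theta_t$ is an arc from shutdown node $t$ to the terminal sink. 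Equations~\eqref{eqn:LP_D32} and~\eqref{eqn:LP_D33} are then exactly flow conservation at the shutdown and startup nodes, respectively. Since the right-hand sides are integral and the matrix is totally unimodular, the flow polytope is integral; because the underlying graph is acyclic (time is strictly increasing along every arc), its vertices are precisely $0/1$ single-path incidence vectors, each corresponding to one feasible ON/OFF schedule respecting the flexible min-up/-down and max-up windows encoded in the index sets $\overline{\T\K}$ and $\overline{\K\T}$.

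Next I would bring in the continuous variables $q$ and $\phi$. The essential observation is that every constraint in~\eqref{eqn:LP_D351}--\eqref{eqn:LP_D392} coupling a continuous variable to the flow is homogeneous of degree one in a \emph{single} flow variable: on an ON-interval $tk\in\overline{\T\K^2}$ the bounds~\eqref{eqn:LP_D352}, the ramping limits~\eqref{eqn:LP_D362} and~\eqref{eqn:LP_D38}, and the cost epigraph cuts~\eqref{eqn:LP_D392} all have a left-hand side linear in $q_{tk},\phi_{tk}$ and a right-hand side equal to a constant times $y_{tk}$, and analogously the first-interval constraints~\eqref{eqn:LP_D351}, \eqref{eqn:LP_D361}, \eqref{eqn:LP_D37}, \eqref{eqn:LP_D391} are scaled by $w_k$. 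Hence when the governing indicator equals $0$ the associated dispatch variables are pinned to the origin, and when it equals $1$ they range over exactly the single-interval economic-dispatch polytope (capacity, startup/shutdown and general ramping, and the convex piecewise-linear cost epigraph). This is the homogenized ``cone over the dispatch polytope, scaled by the indicator'' structure that lets the disaggregated variables attach cleanly, arc by arc, to the flow.

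With these two pieces in place I would finish by a flow-decomposition argument. Given any feasible point $(\bar w,\bar y,\bar z,\bar\theta,\bar q,\bar\phi)$ of~\eqref{model:LP_D3}, I decompose its flow part into a convex combination $\sum_p\lambda_p\chi^p$ of $0/1$ path vectors. For each arc $(t,k)$ I split the attached dispatch mass proportionally, assigning to path $p$ the share $\bar q^s_{tk}\,\chi^p_{tk}/\bar y_{tk}$ (and likewise for $\phi$ and for the first-interval arcs via $\bar w_k$), which is well defined because $\bar y_{tk}=\sum_p\lambda_p\chi^p_{tk}$. By the degree-one homogeneity noted above, the resulting point with flow $\chi^p$ satisfies every coupling constraint whenever the original point does, and is binary in $(w,y,z,\theta)$; moreover $\sum_p\lambda_p$ of these points reproduces $(\bar w,\bar y,\bar z,\bar\theta,\bar q,\bar\phi)$ exactly. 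Thus every feasible point is a convex combination of feasible points with binary flow, so $\De^4_i$ is generated by such points. Identifying each binary-flow point with an integer schedule of $\Xe_i$ through the natural aggregation $x_s=\sum_{tk}q^s_{tk}$ and the indicator sums defining $u,v,e$ then yields $\De^4_i=\mathrm{conv}(\Xe_i)$; and since the objective~\eqref{eqn:LP_D3obj} is linear, an optimal vertex of $\De^4_i$ exists, giving an optimal solution binary in $w,y,z$.

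The step I expect to be the main obstacle is the proportional splitting of the dispatch variables and the verification that each per-path point is genuinely feasible: one must check that the capacity bounds, all ramp constraints~\eqref{eqn:LP_D36}--\eqref{eqn:LP_D38}, and every cost cut survive the common scaling $\chi^p_{tk}/\bar y_{tk}$ simultaneously, and that the degenerate case $\bar y_{tk}=0$ (where the share is set to zero) is handled so that no path receives dispatch on an arc it does not traverse. A secondary care point is confirming that the index sets $\overline{\K}$, $\overline{\T\K^1}$, $\overline{\T\K^2}$, and $\overline{\K\T}$ are defined so that the $0/1$ paths of the network are in exact bijection with the schedules satisfying the time-variant min-up, min-down, and max-up limits---the network must contain no path for an infeasible schedule and omit none for a feasible one---since this correspondence is what ties the totally unimodular flow model back to $\Xe_i$.
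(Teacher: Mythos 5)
Your proposal is correct in substance, but it takes a genuinely different route from the paper's own proof. The paper never invokes total unimodularity: it proves Theorem~\ref{thm2} by exhibiting~\eqref{model:LP_D3} as the linear-programming dual of a dynamic program --- it defines the value functions $V_\downarrow,V_\uparrow$ in~\eqref{eqn:Ndp1}--\eqref{eqn:Ndp3}, rewrites the recursion as the LP~\eqref{model:LP} with the dispatch costs $C(t,k)$ embedded via the dual of~\eqref{model:ED}, dualizes the whole system to recover~\eqref{model:LP_D3}, and then argues that for any linear objective the optimal DP schedule maps to a feasible point of~\eqref{model:LP_D3} that is binary in $(w,y,z)$ and has objective value $V_\uparrow(0)$, so strong duality certifies its optimality. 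You instead work directly on the final formulation: constraints~\eqref{eqn:LP_D21}--\eqref{eqn:LP_D33} with~\eqref{eqn:LP_D393} form a node--arc incidence system of an acyclic source--sink network (each column of $(w,y,z,\theta)$ has at most one $+1$ and one $-1$), the coupling constraints~\eqref{eqn:LP_D351}--\eqref{eqn:LP_D392} are positively homogeneous of degree one in the single governing flow variable, and path decomposition with proportional splitting of $(q,\phi)$ writes any feasible point as a convex combination of binary-flow feasible points. Your route buys more than the paper's: it characterizes \emph{every} vertex of $\De_i^4$ (integrality of the polytope), not merely the existence of a binary optimum for each objective, and it avoids the DP and the double dualization entirely; the paper's route buys automatic bookkeeping of the min-up/-down and max-up windows (they are hard-wired into the DP index ranges, whereas in your argument the path--schedule bijection must be verified by hand, as you flag) and yields the ``binary optimal solution'' claim immediately. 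Two details in your write-up need patching, both minor: (i) when $\bar y_{tk}=0$, constraint~\eqref{eqn:LP_D352} forces $\bar q_{tk}=0$ but~\eqref{eqn:LP_D392} only forces $\bar\phi_{tk}\geq 0$, so a feasible point may carry positive $\phi$ on an untraversed arc; setting that share to zero, as you propose, fails to reproduce such points in the convex combination --- instead copy $\bar\phi_{tk}$ unchanged to each path's point, which stays feasible because on an untraversed arc the epigraph constraint reduces to $\phi\geq 0$; (ii) since the region is unbounded in the $\phi$ directions, your closing ``optimal vertex exists'' step should note that these recession directions strictly increase the objective~\eqref{eqn:LP_D3obj}, so the minimum is attained at a binary-flow point.
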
	
\begin{proof}
\revise{The detailed proof is provided in Appendix~\ref{app:1}. } 
\end{proof}
Finally, \revise{it can be observed that the convex hull description for the initial-off generators can be obtained similarly.} 
\section{Iterative Algorithms for Convex-hull Pricing for MISO Instances} \label{sec:algorithm}

Based on the above convex hull descriptions~\eqref{mod:2},~\eqref{mod:3},~\eqref{mod:4}, and~\eqref{mod:5}, we derive the general convex hull pricing formulation ($P$) as follows. For notation brevity, we let $\G_4 = \G/(\G_1 \cup \G_2 \cup \G_3)$ represent all unclassified generators using formulation~\eqref{model:LP_D3}. Since we have the convex hull descriptions for each type of generators and convex envelope for all cost functions, we can provide the exact convex hull price and minimize the uplift payment by solving the following linear program:	
\begin{subeqnarray}
	\hspace{-0.3in}(P): Z^*_{\ggy{\tiny\mbox{LPO}}} & =\hspace{0.1in} & \hspace{-0.5in}\min_{f,x,u,v,e,w,y,z,\theta,q,\phi} \sum_{i\in \G_1\cup\G_2\cup\G_3}\sum_{t\in \T}g_{it} +\sum_{i \in \G_4} g'_{it} \label{sys:total1} \\
	\textrm{s.t.}& & ~\eqref{eqn:sys}, (f_i,x_i,u_i,v_i,e_i) \in {\ggy{\De^1_i}}, \forall i \in \G_1,\nonumber\\
	&& (f_i,x_i,u_i,v_i,e_i) \in {\ggy{\De^2_i}}, \forall i \in \G_2,\nonumber\\
	&& (f_i,x_i,u_i,v_i,e_i) \in {\ggy{\De^3_i}}, \forall i \in \G_3,\nonumber\\
	&& (w_i,y_i,z_i,\theta_i, q_i, \phi_i) \in {\ggy{\De^4_i}}, \forall i \in \G_4,\nonumber
\end{subeqnarray}
where the cost function $g_{it}$ of each generator $i$ in $\G_1\cup\G_2\cup \G_3$ can be expressed as
\begin{equation}
g_{it} = S'_i v_{it}+ S_ie_{it} + {\ggy{G_{it}}} u_{it} + f_{it}     \label{objori1}
\end{equation} 
and the cost function $g'_{it}$ of each generator $i$ in $\G_4$ can be expressed as
\begin{eqnarray}
g'_{it}& = &\hspace{-0.1in}\hspace{-0.1in}\sum_{k \in \overline{\K},k <|\T|} \hspace{-0.1in} \textcolor{black}{S_i}(k+s_0)\textcolor{black}{w_{ik}} +\hspace{-0.15in} \sum_{tk \in \overline{\T\K^2_i},k<|\T|} \hspace{-0.15in}\textcolor{black}{S_i}(k-t+1) \textcolor{black}{y_{itk}} \nonumber\\
	&& \hspace{-0.15in} +\sum_{kt \in \overline{\K\T}}\textcolor{black}{S'_i}(t-k-1) \textcolor{black}{z_{ikt}} + \sum_{tk \in \overline{\T\K} }  \sum_{s=t}^{k} \phi_{itk}^s. \label{objeuc}
\end{eqnarray}
	
The formulation ($P$) is sufficient to solve the convex hull pricing problem. However, a large number of variables and constraints in the EUC formulation (i.e.,~\eqref{model:LP_D3}) increase the computational complexity and lead to a long solving time when the number of generators in $\G_4$ is large. To solve the large-scale problem ($P$), we develop an iterative algorithm, in which a relaxation problem \revise{where $\De^4_i$ is replaced by $\De^3_i$ and $g'_{it}$ is replaced by $g_{it}$ for each generator $i \in \G_4$ in ($P$)}  is solved in the first step. Then the constraints \revise{$\De^4_i$} in EUC formulation are added gradually when needed to tighten the relaxation. \revise{Meanwhile, the objective function $g_{it}$ is replaced by $g'_{it}$ when $\De_i^4$ is added back}. Our approach can provide a very tight approximation of the convex hull price\yu{,} with most of the cases converging at the optimal solution. Meanwhile the algorithm terminates in a short time. 
\subsection{\rere{Methodology background}}
Before describing the detailed algorithms, we first introduce Lemma~\ref{lemma1} and Theorem~\ref{thm3} to provide a theoretical foundation showing that the uplift payment amount will decrease and converge under our algorithm. For notation brevity, we use $x \in \R^n$ to represent all variables in~\eqref{eq:1}, which contains binary variable set $x_1$ and continuous variable set $x_2$. The traditional 3-bin MILP UC formulation~\eqref{eq:1} can be abstracted as follows: 
	\begin{eqnarray}
	Z^*_{\tiny\mbox{QIP}}= \min\{c^{\intercal}x| Ex \geq  p, x\in \Xe\} ,\label{simeq1}\\
	\Xe =  \{x_1 \in \B^{n_1}, x_2 \in \R^{n_2}|Ax \leq b\}. \nonumber
	\end{eqnarray}
	\begin{lemma}\label{lemma1}
		For $A \in \R ^{m \times n}$, $A' \in \R ^{m \times n}$, $\revise{E} \in \R^{p \times n}$, $b\in\R^m$, $b'\in\R^m$, $\revise{p}\in \R^p$ and $c \in \R^{n}$, we consider the integer optimization problem~\eqref{simeq1}, its tightened linear programming relaxation problem $(P_\text{C})$ in which $A'x \leq b'$ dominates $Ax \leq b$, and the Lagrangian relaxation $\De_\text{C}$ corresponding to a \dualvector{dual vector} $\gamma$ as follows:
		\begin{eqnarray}
		(P_{\text{C}}): & \hspace{-0.4in}\ggy{Z_{\text{\tiny C}}} = \min\{c^{\intercal}x| A'x \leq b', \revise{Ex \geq p}, x\in \R^n\}, \label{pt}\\
		(\De_{\text{C}}):& \hspace{-0.1in}\ggy{Z_{\text{\tiny C}}}(\gamma) \hspace{-0.05in} = \min \{c^{\intercal}\hspace{-0.03in}x\hspace{-0.03in}+\hspace{-0.03in}\gamma^{\intercal}\hspace{-0.03in} (p\hspace{-0.03in}-\hspace{-0.03in}Ex)| A'x \leq b', x \in \R^n\}. \label{lptr}
		\end{eqnarray}
Given an optimal \dualvector{dual vector} $\bar \gamma$ for constraints $Ex\geq p$ in~\eqref{pt}, if
		
		\begin{enumerate}[(i)]
			\item $x^*=\{x_1^*,x_2^*\}$ is an optimal solution to problem $\ggy{Z_{\text{\tiny C}}}(\bar \gamma)$,\label{condition1}
			\item $x_1^*$ are all binaries, \label{condition2}
		\end{enumerate}
		then the uplift payment given price $\bar \gamma$ can be calculated as $U = Z^*_{\tiny\mbox{QIP}}-\ggy{Z_{\text{\tiny C}}}(\bar \gamma)$ {($\boxtimes$)}.
	\end{lemma}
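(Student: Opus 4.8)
The plan is to reduce the claimed identity to a single equality between two Lagrangian values and then prove that equality by a sandwich argument driven by conditions~\eqref{condition1}--\eqref{condition2}. First I would rewrite the uplift payment~\eqref{eq:upr} in the abstract notation of~\eqref{simeq1}. Aggregating over all generators, the self-scheduling profit is $\sum_{i\in\G}Z^i(\bar\gamma)=\max_{x\in\Xe}\{\bar\gamma^\intercal Ex-c^\intercal x\}$, the profit following the ISO schedule $\bar x$ equals $\bar\gamma^\intercal E\bar x-c^\intercal\bar x$ with $c^\intercal\bar x=Z^*_{\tiny\mbox{QIP}}$, and the FTR term is $\bar\gamma^\intercal E\bar x-\bar\gamma^\intercal p$. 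Substituting these into~\eqref{eq:upr}, the two $\bar\gamma^\intercal E\bar x$ contributions cancel, and after rewriting the maximization as $-\min_{x\in\Xe}\{c^\intercal x-\bar\gamma^\intercal Ex\}$ I would obtain
\[
U=Z^*_{\tiny\mbox{QIP}}-\min_{x\in\Xe}\{c^\intercal x+\bar\gamma^\intercal(p-Ex)\}.
\]
Denote the right-hand minimum by $q(\bar\gamma)$; this is precisely the Lagrangian dual function of~\eqref{simeq1} taken over the \emph{integer} set $\Xe$. Thus it suffices to prove $q(\bar\gamma)=\ggy{Z_{\text{\tiny C}}}(\bar\gamma)$.

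Next I would establish the two inequalities that pin this equality. Because $(P_{\text{C}})$ is a valid (tightened) relaxation, every point of $\Xe$ satisfies $A'x\le b'$, i.e.\ $\Xe\subseteq\{x\mid A'x\le b'\}$; minimizing the common objective $c^\intercal x+\bar\gamma^\intercal(p-Ex)$ over the larger feasible set in~\eqref{lptr} can only lower the value, giving $\ggy{Z_{\text{\tiny C}}}(\bar\gamma)\le q(\bar\gamma)$. For the reverse inequality I would invoke the hypotheses: let $x^*$ be an optimal solution of $\ggy{Z_{\text{\tiny C}}}(\bar\gamma)$ as in~\eqref{condition1}. Since $A'x\le b'$ dominates $Ax\le b$ we have $\{x\mid A'x\le b'\}\subseteq\{x\mid Ax\le b\}$, so $Ax^*\le b$; combined with~\eqref{condition2} (the components $x_1^*$ are binary) this yields $x^*\in\Xe$. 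Hence $x^*$ is feasible for the integer minimization defining $q(\bar\gamma)$, so $q(\bar\gamma)\le c^\intercal x^*+\bar\gamma^\intercal(p-Ex^*)=\ggy{Z_{\text{\tiny C}}}(\bar\gamma)$. The two inequalities force $q(\bar\gamma)=\ggy{Z_{\text{\tiny C}}}(\bar\gamma)$, and substituting back into the expression for $U$ proves $U=Z^*_{\tiny\mbox{QIP}}-\ggy{Z_{\text{\tiny C}}}(\bar\gamma)$.

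The main obstacle, and the step deserving the most care, is the first one: carrying out the bookkeeping in~\eqref{eq:upr} so that the uplift collapses exactly to $Z^*_{\tiny\mbox{QIP}}-q(\bar\gamma)$, and in particular confirming that the self-scheduling profit is taken over the original integer set $\Xe$ (not its relaxation), so that $q(\bar\gamma)$ really is the integer Lagrangian dual. A secondary point to fix is the precise reading of ``dominates'': validity $\Xe\subseteq\{x\mid A'x\le b'\}$ drives one inequality while tightness $\{x\mid A'x\le b'\}\subseteq\{x\mid Ax\le b\}$ drives the other, so both containments must be extracted correctly from the hypothesis. Finally, I would note that $\bar\gamma$ being an optimal dual vector of $(P_{\text{C}})$ serves only to guarantee (via LP strong duality and complementary slackness) that an optimal $x^*$ of $\ggy{Z_{\text{\tiny C}}}(\bar\gamma)$ exists and coincides with a primal optimizer of $(P_{\text{C}})$; the identity itself then follows from~\eqref{condition1}--\eqref{condition2} alone.
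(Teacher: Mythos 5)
Your proof is correct and follows essentially the same route as the paper's: both expand the uplift definition~\eqref{eq:upr} so that the ISO-profit and FTR terms cancel, and then pin down the self-scheduling profit as the Lagrangian value $Z_{\text{C}}(\bar\gamma)-\bar\gamma^{\intercal}p$ via the same two-inequality sandwich (a relaxation argument gives one direction, and conditions (i)--(ii) make $x^*$ feasible for the integer problem, giving the other). The only cosmetic difference is that you sandwich the integer Lagrangian over $\Xe$ against $Z_{\text{C}}(\bar\gamma)$, invoking both readings of ``dominates'' explicitly, whereas the paper writes the self-scheduling problem directly over $\{x \mid A'x\le b',\, x_1\in\B^{n_1}\}$ and relaxes the binaries---your bookkeeping is, if anything, slightly more careful on this point.
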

	\begin{proof}
		Based on the definition of uplift payment in~\eqref{eq:upr}, the profit following the ISO's schedule can be calculated as $P_{\tiny\text{ISO}} = \revise{\bar\gamma^{\intercal}E\bar{x}} - Z^*_{\tiny\mbox{QIP}}$ {($\diamondsuit$), which is the second item in~\eqref{eq:upr}}, and the maximum profit $\sum_{i\in\G} Z^i(\bar\gamma)$ obtained through self-scheduling {problem~\eqref{eq:selfmax}} given price $\bar \gamma$ can be derived by solving the following problem{, where $c^{\intercal}x$ is an abstract form of $ C_i(x_i,u_i,v_i,e_i)$ in~\eqref{eq:selfmax}}:
		\begin{equation}
		\hspace{-0.02in}\sum_{i\in\G} Z^i(\bar\gamma) = -\min\{c^{\intercal}x-\bar\gamma^{\intercal}Ex |A'x \leq b', x_1 \in \B^{n_1}\}.~\label{eq:self}
		\end{equation} 
		
{Based on formulations~\eqref{lptr} and~\eqref{eq:self}}, it is clear that $\ggy{ -\sum_{i\in\G} Z^i(\bar\gamma) \geq Z_{\text{\tiny C}}}(\bar\gamma)-\bar\gamma^{\intercal}p$ {($\bigstar$)}, since $x_1$ is relaxed to be continuous between $0$ and $1$ in~\eqref{lptr}. Meanwhile, when conditions~\eqref{condition1} and~\eqref{condition2} hold, $x^*$ is a feasible solution to~\eqref{eq:self}. Thus, $ -\sum_{i\in\G} Z^i(\bar\gamma) \leq \ggy{Z_{\text{\tiny C}}}(\bar\gamma)-\bar\gamma^{\intercal}p$ {($\clubsuit$)}. {Combining ($\bigstar$) and ($\clubsuit$),} we have $\sum_{i\in\G} Z^i(\bar\gamma) = -(\ggy{Z_{\text{\tiny C}}}(\bar\gamma)-\bar\gamma^{\intercal}p)$ {($\divideontimes$)}.
		
Therefore, $U=\sum_{i\in\G} Z^i(\bar\gamma)-P_{\tiny\text{ISO}}\revise{+\bar\gamma^{\intercal}E\bar{x}-\bar\gamma^{\intercal}p} =-(\ggy{Z_{\text{\tiny C}}}(\bar\gamma)-\bar\gamma^{\intercal}p)-(\revise{\bar\gamma^{\intercal}E\bar{x}} - Z^*_{\tiny\mbox{QIP}})\revise{+\bar\gamma^{\intercal}E\bar{x}-\bar\gamma^{\intercal}p} = Z^*_{\tiny\mbox{QIP}}-\ggy{Z_{\text{\tiny C}}}(\bar \gamma)$, {where the first equation holds due to~\eqref{eq:upr}, the second equation holds based on ($\divideontimes$) and ($\diamondsuit$), and the last one is directly derived by eliminating the same items}.
\end{proof}

\begin{remark}
\serevise{Note here that the prices $\bar\gamma$ are calculated by solving a linear program\yu{,} and thus, they are not linked to the optimal integer solutions in the original problem.}
\end{remark}
\begin{remark}
	 \serevise{When $Z^*_{\tiny\mbox{QIP}}$ is not obtained in practice, the uplift payment is calculated using equation $U = Z^*_{\tiny\mbox{QIP}}-{Z_{\text{\tiny C}}}(\bar \gamma)$, i.e., ($\boxtimes$), by replacing $Z^*_{\tiny\mbox{QIP}}$ with its best upper bound $Z'_{\tiny\mbox{QIP}}$ obtained by the MIP solver. The derived uplift payment using $Z'_{\tiny\mbox{QIP}}$ provides an upper bound of the optimal uplift payment, since $Z_{\text{\tiny C}}(\bar\gamma)$ does not change for a given $\bar\gamma$ and $Z'_{\tiny\mbox{QIP}} \geq Z^*_{\tiny\mbox{QIP}}$. The extra amount of uplift payment is equal to $Z'_{\tiny\mbox{QIP}}-Z^*_{\tiny\mbox{QIP}}$. }
\end{remark}

\begin{theorem}\label{thm3}
For $A_1 \in \R ^{m \times n}$, $A_2 \in \R ^{m \times n}$, $E \in \R^{p \times n}$, $b_1\in\R^m$, $b_2\in\R^m$, $d\in \R^p$, and $c \in \R^{n}$, considering the integer optimization problem~\eqref{simeq1}, its linear programming relaxation $z =\min\{c^{\intercal}x| Ax \leq b, Ex \geq p, x\in \R^n\}$ (for notation brevity, we define $\Xe_L=\{x\in \R^n|Ax \leq b\}$), and two tightened linear programming relaxation problems $(P_{{\text{\tiny C}}_1}),(P_{{\text{\tiny C}}_2})$ shown below, 
\begin{eqnarray}
(P_{{\text{\tiny C}}_1}): & \ggy{Z_{\text{\tiny C}_1}} = \min\{c^{\intercal}x|Ex \geq p,x \in \Xe_1\}, \label{pt1}\\
(P_{{\text{\tiny C}}_2}): & \ggy{Z_{\text{\tiny C}_2}} = \min\{c^{\intercal}x|Ex \geq p, x\in \Xe_2\}, \label{pt2}
\end{eqnarray}
in which ${conv(\Xe) \subseteq } \Xe_2 \subseteq \Xe_1 \subseteq \Xe_L$ with $\Xe_1 =\{x\in \R^n|A_1x \leq b_1\}$ and $\Xe_2 =\{x\in \R^n|A_2x \leq b_2\}$. Under these two formulations, we use the \dualvector{dual vector} for constraints $Ex\geq p$ to derive the market clearing price, written as $\bar \gamma_1$ and $\bar \gamma_2$. Then the uplift payment under $\bar \gamma_2$ will be no larger than that under $\bar \gamma_1$ if conditions~\eqref{condition1} and~\eqref{condition2} in Lemma~\ref{lemma1} hold.
\end{theorem}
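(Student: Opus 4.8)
The plan is to reduce the comparison of the two uplift payments to a comparison of the two Lagrangian relaxation values, and then to exploit the nesting $\Xe_2 \subseteq \Xe_1$. First I would apply Lemma~\ref{lemma1} separately to each of the two tightened relaxations. Assuming conditions~\eqref{condition1} and~\eqref{condition2} hold for both dual vectors $\bar\gamma_1$ and $\bar\gamma_2$ (the theorem hypothesis), the lemma gives the closed form $U(\bar\gamma_j) = Z^*_{\tiny\mbox{QIP}} - Z_{\text{\tiny C}_j}(\bar\gamma_j)$ for $j \in \{1,2\}$, where $Z_{\text{\tiny C}_j}(\gamma) = \min\{c^{\intercal}x + \gamma^{\intercal}(p - Ex) \mid x \in \Xe_j\}$ is the Lagrangian relaxation built from the constraint system defining $\Xe_j$. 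Since $Z^*_{\tiny\mbox{QIP}}$ is common to both expressions, the claim $U(\bar\gamma_2) \leq U(\bar\gamma_1)$ is equivalent to $Z_{\text{\tiny C}_2}(\bar\gamma_2) \geq Z_{\text{\tiny C}_1}(\bar\gamma_1)$.

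The central step is to show that, for each $j$, evaluating the Lagrangian relaxation at the optimal dual vector recovers the corresponding LP optimum, i.e. $Z_{\text{\tiny C}_j}(\bar\gamma_j) = Z_{\text{\tiny C}_j}$. I would establish this via LP strong duality: write the dual of $(P_{\text{\tiny C}_j})$ with $\gamma$ the multiplier on $Ex \geq p$ and $\eta_j$ the multiplier on $A_j x \leq b_j$; then write the dual of the Lagrangian problem $(\De_{\text{\tiny C}_j})$ with $\gamma$ fixed at $\bar\gamma_j$. Because $\bar\gamma_j$ is by definition the optimal value of the $Ex \geq p$ multiplier in the dual of $(P_{\text{\tiny C}_j})$, substituting $\gamma = \bar\gamma_j$ turns the dual of $(P_{\text{\tiny C}_j})$ into precisely the dual of $(\De_{\text{\tiny C}_j})$, and both attain the same optimal value over $\eta_j$. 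Strong duality, valid since both are linear programs over the nonempty polyhedron $\Xe_j$, then forces the two primal optima to coincide, yielding $Z_{\text{\tiny C}_j}(\bar\gamma_j) = Z_{\text{\tiny C}_j}$. This bookkeeping with the two dual programs is where I expect the main effort to lie; it is exactly the argument underlying the classical Lagrangian-dual identity for linear programs.

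Given this identity, the remainder is immediate from monotonicity of the minimum over nested feasible sets. Since $\Xe_2 \subseteq \Xe_1$, the feasible region of $(P_{\text{\tiny C}_2})$ is contained in that of $(P_{\text{\tiny C}_1})$ (both further restricted by $Ex \geq p$), so minimizing the common objective $c^{\intercal}x$ over the smaller set cannot yield a smaller optimum, i.e. $Z_{\text{\tiny C}_2} \geq Z_{\text{\tiny C}_1}$. Chaining the identities gives $Z_{\text{\tiny C}_2}(\bar\gamma_2) = Z_{\text{\tiny C}_2} \geq Z_{\text{\tiny C}_1} = Z_{\text{\tiny C}_1}(\bar\gamma_1)$, and subtracting from $Z^*_{\tiny\mbox{QIP}}$ yields $U(\bar\gamma_2) \leq U(\bar\gamma_1)$, as desired.

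In summary, the main obstacle is the strong-duality identity $Z_{\text{\tiny C}_j}(\bar\gamma_j) = Z_{\text{\tiny C}_j}$; once it is in hand, the nesting argument is routine. A secondary point to verify explicitly is that the hypotheses of Lemma~\ref{lemma1} are assumed at \emph{both} $\bar\gamma_1$ and $\bar\gamma_2$, since the closed form $U(\bar\gamma_j) = Z^*_{\tiny\mbox{QIP}} - Z_{\text{\tiny C}_j}(\bar\gamma_j)$ is needed for each formulation before the comparison can be made.
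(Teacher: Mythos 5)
Your proposal is correct and follows essentially the same route as the paper's proof: apply Lemma~\ref{lemma1} at both $\bar\gamma_1$ and $\bar\gamma_2$, invoke the Lagrangian strong-duality identity $Z_{\text{\tiny C}_j}(\bar\gamma_j) = Z_{\text{\tiny C}_j}$, and conclude via monotonicity of the minimum over the nested sets $\Xe_2 \subseteq \Xe_1$. The only difference is that you spell out the dual-bookkeeping argument behind the identity $Z_{\text{\tiny C}_j}(\bar\gamma_j) = Z_{\text{\tiny C}_j}$, which the paper simply asserts as ``strong duality in Lagrangian relaxation.''
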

\begin{proof}
According to Lemma~\ref{lemma1}, the uplift payments under formulations~\eqref{pt1} and~\eqref{pt2} can be calculated as $U_1 =  Z^*_{\tiny\mbox{QIP}}-\ggy{Z_{\text{\tiny C}_1}}(\bar \gamma_1)$ and $U_2 =  Z^*_{\tiny\mbox{QIP}}-\ggy{Z_{\text{\tiny C}_2}}(\bar \gamma_2)$ when conditions~\eqref{condition1} and~\eqref{condition2} hold. Based on strong duality in Lagrangian relaxation, we have $\ggy{Z_{\text{\tiny C}_1}} = \ggy{Z_{\text{\tiny C}_1}}(\bar \gamma_1)$ and $\ggy{Z_{\text{\tiny C}_2}} = \ggy{Z_{\text{\tiny C}_2}}(\textcolor{black}{\bar \gamma_2})$. Since $\Xe_2 \subseteq \Xe_1$, we have $\ggy{\ggy{Z_{\text{\tiny C}_1}} \leq \ggy{Z_{\text{\tiny C}_2}}}$. Thus, we can conclude $U_1 \geq U_2$.
	\end{proof}
	\begin{remark} \label{remark2}
		If $\Xe_2=conv(\Xe)$, then $\bar \gamma_2$ is a convex hull price and $Z^*_{\tiny\mbox{QIP}}-\ggy{Z_{\text{\tiny C}_2}}(\bar \gamma_2)$ is the minimum uplift payment.
	\end{remark}

\subsection{\rere{The detailed algorithm}}
We consider the following ($P$1) as the starting point:
\begin{subeqnarray}
		(P1):Z_{{\text{\tiny C}}} & =\hspace{0.1in} & \hspace{-0.2in}\min_{f,x,u,v,e} \sum_{i\in \G}\sum_{t\in \T}g_{it}  \nonumber \\
		\textrm{s.t.}& & ~\eqref{eqn:sys}, (f_i,x_i,u_i,v_i,e_i) \in {\ggy{\De^1_i}}, \forall i \in \G_1,\nonumber\\
		&& (f_i,x_i,u_i,v_i,e_i) \in {\ggy{\De^2_i}}, \forall i \in \G_2,\nonumber\\
		&& (f_i,x_i,u_i,v_i,e_i) \in {\ggy{\De^3_i}}, \forall i \in \G_3 \cup \G_4.\nonumber
\end{subeqnarray}
In this formulation, we keep the convex hull descriptions for generators in $\G_1 \cup \G_2 \cup \G_3$ in problem ($P$), and relax the convex hull set ${\ggy{\De^4_i}}$ for each generator in $\G_4$ to be ${\ggy{\De^3_i}}$. As stated above, adding EUC formulations~\eqref{model:LP_D3} to all of the generators in $\G_4$ will lead to the perfect formulation, but it will increase the computational complexity significantly. The iterative algorithm is designed to select ``necessary'' generators in $\G_4$ and only add~\eqref{model:LP_D3} of ``necessary'' generators into ($P1$), which can effectively tighten the formulation and improve $Z_{\text{\tiny C}}$ with \yu{a} less computational burden. \ggy{The detailed algorithm is shown in Algorithm~\ref{alg:1}. Here we provide the explanation.}
	
Following the conditions in Lemma~\ref{lemma1}, we first solve ($P1$) and get the \dualvector{dual vector} $\bar \gamma$ of equation~\eqref{eqn:sys}. Then we solve the following ($P2$)\yu{,} which is 
a Lagrangian relaxation problem of ($P1$) based on the given price $\bar \gamma$:
	\begin{subeqnarray}
	\hspace{-0.1in}(P2):Z_{\text{\tiny C}}(\bar \gamma)& \hspace{-0.15in} =  \bar \gamma^{\intercal}p + \min \limits_{f,x,u,v,e} \sum \limits_{i\in \G}\sum \limits_{t\in T}(g_{it}-\bar \gamma ^{\intercal} E_i x_{it} ) \nonumber \\
	\textrm{s.t.}& (f_i,x_i,u_i,v_i,e_i) \in {\ggy{\De^1_i}}, \forall i \in \G_1,\nonumber\\
	& (f_i,x_i,u_i,v_i,e_i) \in {\ggy{\De^2_i}}, \forall i \in \G_2,\nonumber\\
	& (f_i,x_i,u_i,v_i,e_i) \in {\ggy{\De^3_i}}, \forall i \in \G_3 \cup \G_4.\nonumber
	\end{subeqnarray}	
	
To satisfy conditions~\eqref{condition1} and~\eqref{condition2} in Lemma~\ref{lemma1}, we need to get an integral solution to ($P2$). In ($P2$), since there are no coupling constraints among generators, this problem is equivalent to maximizing profit for each generator independently and summing them together. \threvise{Assume the optimal solution to ($P2$) is $(f^*,x^*,u^*,v^*,e^*)$}. For each generator $\threvise{j} \in \G_1\cup\G_2\cup\G_3$, the values of $u^*_{\threvise{j}}, v^*_{\threvise{j}}, e^*_{\threvise{j}}$ will be binaries for any given price $\bar \gamma$, since ${\ggy{\De^1_{\threvise{j}}}}, {\ggy{\De^2_{\threvise{j}}}}$, and ${\ggy{\De^3_{\threvise{j}}}}$ provide the convex hull descriptions. For each generator ${\threvise{j}} \in \G_4$, we may get fractional values in $u^*_{\threvise{j}}, v^*_{\threvise{j}}, e^*_{\threvise{j}}$ since ${\ggy{\De^3_{\threvise{j}}}}$ is a relaxation of ${\ggy{\De^4_{\threvise{j}}}}$. In this algorithm, we consider the constraints in ${\ggy{\De^4_{\threvise{j}}}}$ as a cutting plane group. When a fractional solution to a particular generator $\threvise{j} \in \G_4$ is obtained in ($P2$), we add this cutting plane group \revise{$\De^4_{\threvise{j}}$} to ($P2$) {\ggy{and ($P1$)}} and replace the objective function $g_{\threvise{j}t}$ with $g'_{\threvise{j}t}$. \revise{In this way, we can} ensure that this generator has integral solutions \revise{when we solve ($P2$)} with any given price $\bar \gamma$. The set \ggy{$\Gamma$ (representing the set of generators in $\G_4$ using the constraints in ${\ggy{\De^4_{\threvise{j}}}}$)} is updated accordingly by adding this generator. Since the \dualvector{dual vector} $\bar \gamma$ could change after updating the formulation for certain generators in $\G_4$, we need to solve ($P1$) again and get an updated $\bar \gamma$. We thus create an inner loop to repeat this process until we could not find an updated price $\bar \gamma$.

Next, we further decrease the uplift payment by tightening ($P1$) utilizing Theorem~\ref{thm3}. The optimal solution to the updated ($P1$) could be infeasible for constraints in~\eqref{model:LP_D3} for some generator $i$ in $\G_4/\ggy{\Gamma}$ since ${\ggy{\De^4_i}}$ is relaxed to ${\ggy{\De^3_i}}$. Given the optimal solution $\threvise{(\hat{f}_i,\hat{x}_i,\hat{u}_i,\hat{v}_i,\hat{e}_i)}$ of each generator $i$ in ($P1$), we can identify infeasible generators by solving the following ($P3$) for each generator $i \in \G_4/\ggy{\Gamma}$ with fractional \threvise{$\hat{u}_i,\hat{v}_i$}, or \threvise{$\hat{e}_i$}: 
	\begin{subeqnarray}
 	 &\hspace{-0.6in}(P3): & \hspace{-0.3in} \min_{w_i,z_i,y_i,\theta_i, q_i, \phi_i} 1 \label{sys:total3} \\
	&\hspace{-0.8in}\textrm{s.t.}&\hspace{-0.3in}  (w_i,z_i,y_i,\theta_i, q_i, \phi_i) \in {\ggy{\De^4_i}},\nonumber\\
	 &&\hspace{-0.3in} \threvise{\hat x_{is}} = \hspace{-0.2in}\sum \limits_{tk \in \overline{\T\K}, t\leq s \leq k} q^{s}_{tk},\threvise{\hat f_{is}} = \hspace{-0.2in}\sum \limits_{tk \in \overline{\T\K}, t\leq s \leq k} \hspace{-0.2in} \phi^{s}_{tk},\forall s \in \T,\slabel{eq:map1}\\
	  &&\hspace{-0.3in}\threvise{\hat v_{is}}= \hspace{-0.2in} \sum \limits_{kt \in \overline{\K\T}, t= s}\hspace{-0.2in} z_{kt},  \threvise{\hat e_{is}}=\hspace{-0.2in}\textcolor{black}{\sum \limits_{kt \in \overline{\K\T},k=s}\hspace{-0.2in} z_{kt}+\theta_{s}},\forall s \in \T,\slabel{eq:map2}\\
	  &&\hspace{-0.3in}\threvise{\hat{u}_{is}} = \hspace{-0.2in} \sum \limits_{tk\in \overline{\T\K}^1, k \geq s} \hspace{-0.2in} w_k+ \hspace{-0.15in}\sum \limits_{tk \in \overline{\T\K}^2, t\leq s \leq k} y_{tk},\forall s \in \T, \slabel{eq:map3} 
	\end{subeqnarray}
where constraints~\eqref{eq:map1}-\eqref{eq:map3} build the linear mapping between the optimal solution to EUC formulation \eqref{model:LP_D3} and that of the traditional 3-bin MILP UC formulation \eqref{eq:sg}. \ggy{If the above ($P3$) is infeasible for generator $i$, then this generator $i$ will be included in set $\Gamma$. We update problems ($P1$) and ($P2$) by adding the constraints in ${\ggy{\De^4_i}}$ and replacing $g_{it}$ with $g'_{it}$, which tightens ($P1$) by cutting off the infeasible solution and meanwhile ensure integral solutions in ($P2$) for those generators. Note here that the generators with \yu{an} integral solution to $\hat u_i, \hat v_i$, and $\hat e_i$ are always feasible \yu{in} ($P3$) since the convex hull ${\ggy{\De^4}}$ contains all of the feasible integral solutions.} 

Similarly, a new \dualvector{dual vector} $\bar \gamma$ could be obtained after solving the updated ($P1$), and it may lead to extra generators with fractional solutions in updated ($P2$). Thus we create an outer loop to repeat this process iteratively to ensure that we have an optimal solution feasible to the perfect formulation ($P$). \revise{The iterative algorithm terminates when we obtain an solution feasible to ($P$)\yu{,} and an integral solution under the corresponding $\bar \gamma$ is obtained in ($P2$).} The resulting uplift payment under price $\bar \gamma$ is ${U} = {\ggy{Z_{\text{\tiny QIP}}^*}} - \ggy{Z_{\text{\tiny C}}}(\bar \gamma)$.

\serevise{We can observe that this algorithm satisfies the conditions in Theorem~\ref{thm3} and thus, the uplift payment is non-increasing in each iteration. Meanwhile, the uplift payment is bounded below by the minimum uplift payment. Therefore, the algorithm converges based on the monotone convergence theory.}

\serevise{\begin{remark}
Note here that although the outer loop could guarantee the constraints of our integral formulation~\eqref{model:LP_D3} satisfied, the objective of the final ($P1$) in the outer loop could be still different from that of ($P$), because our integral formulation has a different objective function~\eqref{objeuc}, as compared to that of the original formulation~\eqref{objori1}. Thus, the uplift payment calculation in the algorithm may not reflect the actual uplift payment amount, i.e., $\serevise{U} \not= {{Z_{\text{\tiny QIP}}^*}} - {Z_{\text{\tiny C}}}(\bar \gamma)$ in general under this case. Accordingly, this does not guarantee the monotone decrement of the uplift payment. Therefore, solving ($P2$) is necessary to have Theorem~\ref{thm3} held to ensure convergence.  
\end{remark}
}

	\begin{algorithm} \label{alg:1}
	\KwData{The parameters of the system}
	\KwResult{Convex hull price $\bar \gamma$ and uplift payments ${U}$}
	Initialization: initialize ($P1$) and construct set $\Gamma = \emptyset$ \;
	Solve \ggy{MILP~\eqref{eq:1}} with the optimal objective value ${\ggy{Z_{\text{\tiny QIP}}^*}}$\;
	Relax variables $u,v,e$ to be continues\;
	\Do {\textcolor{black}{n != 0}} 
	{	
		\ggy{Solve ($P1$) and get the \threvise{optimal solution ($\hat{f},\hat{x},\hat{u},\hat{v},\hat{e}$)} and \dualvector{dual vector} $\bar \gamma$\;}
		\label{doloopbegin}
		\Do {\textcolor{black}{m != 0}}
		{
			Set count number $m = 0$\;
			\For{$j \in \G_4/\Gamma$}
			{
				Given \dualvector{dual vector} $\bar \gamma$, solve ($P2$) and get the optimal solution $(f_j^*,x_j^*,u_j^*,v_j^*,e_j^*)$ and the optimal objective value $\ggy{Z_{\text{\tiny C}}}(\bar \gamma)$ \;
				\If{there exists fractional values in terms of \threvise{$u^*_j,v^*_j,e^*_j$ in the optimal solution to ($P2$)} }
				{
					$\Gamma \leftarrow \Gamma \cup \{j\}$, $m \leftarrow m+1$\;
					Replace the objective function $g_{jt}$ in ($P1$) and ($P2$) with  $g'_{jt}$\; \label{replace1}
					Replace the corresponding constraint set in ($P1$) and ($P2$) with $\ggy{\De^4_j}$\; \label{replace2}
				}
				
			}
			\ggy{Solve the updated ($P1$) and get the optimal solution \threvise{($\hat{f},\hat{x},\hat{u},\hat{v},\hat{e}$)} and \dualvector{dual vector} $\bar \gamma$\;}
		}
		Set count number $n=0$\;
		\For{$i \in \G_4/\Gamma$}
		{\If{there exists fractional values in terms of \threvise{$\hat{u}_i,\hat{v}_i,\hat{e}_i$ in the optimal solution to ($P1$)}}
			{Solve ($P3$) for generator $i$ and check the feasibility\;
				\If{Infeasible}
				{$\Gamma \leftarrow \Gamma \cup \{i\}$; $n \leftarrow n +1$\;
					Replace the objective function $g_{it}$  in ($P1$) and ($P2$) with  $g'_{it}$\;
					Replace the corresponding constraint set in ($P1$) and ($P2$) with ${\ggy{\De^4_i}}$\;
				}
				}
		}
	}\label{doloopend}
	The {estimated} convex hull price is $\bar \gamma$ and the uplift payment is ${U} = {\ggy{Z_{\text{\tiny QIP}}^*}} - \ggy{Z_{\text{\tiny C}}(\bar \gamma)}$.
	\caption{An iterative algorithm (IA1)}
\end{algorithm} 

\subsection{\rere{Complementary algorithm}}
It can be first observed that the outer loop, which tightens ($P1$) and the inner loop\yu{,} which ensures conditions in Lemma~\ref{lemma1} held in ($P2$)\yu{,} can be exchanged. The exchange may result in a different converging path and lead to a different result. We denote these two as IA1 and IA2. To further tighten the results from IA1/IA2, by taking advantage of the independence of ($P3$) to the other parts of the algorithm, we develop a complementary algorithm, which can be implemented in parallel to check and select the potential candidates from $\G_4/\Gamma$ in the resulting ($P1$) after IA1/IA2 to add the corresponding ${\ggy{\De^4_i}}$, which could further tighten the formulation. 

Assuming there are $M+1$ parallel computing nodes with one master and $M$ slave nodes, the steps of the complementary algorithm, denoted as IAC1 and IAC2, are as follows:
	
	Step 1: Run Algorithm IA1/IA2 on the master node. After it finishes, record the resulting ($P1$), price $\bar \gamma$, set $\ggy{\Gamma}$, and $\ggy{Z_{\text{\tiny C}}(\bar \gamma)}$;
	
	Step 2: Divide the generators in $\G_4/\ggy{\Gamma}$ into $M$ groups: $G'_1, \dots, G'_M$. The resulting ($P1$) and $G'_m$ are distributed to each slave node $m$, $\forall m \in \{1,\dots,M\}$;
	
	Step 3: For each slave node $m \in \{1,\dots, M\}$, each generator $j \in G'_m$ is updated in the resulting ($P1$) in sequence following Steps~\ref{replace1}-\ref{replace2} in IA1/IA2. If the optimal objective value is improved by adding a generator $j$, the resulting ($P1$) in the master node will be updated accordingly and solved to get improved uplift payment $U$ and price $\bar \gamma$;
	
	Step 4: The stopping rules are flexible. The algorithm will stop if one of the following rules is satisfied: i) the master node receives $n$ updated generators from the slave nodes, ii) the time limit $T_{\text{limit}}$ is reached, or iii) all of the slave nodes terminate.
	
The final updated convex hull price $\bar \gamma'$ and $\ggy{Z_{\text{\tiny C}}(\bar \gamma')}$ can be used to calculate the uplift payment as $U' ={\ggy{Z_{\text{\tiny QIP}}^*}} - \ggy{Z_{\text{\tiny C}}(\bar \gamma')}$.

	\section{\rere{Computational Experiments on MISO Instances}} \label{sec:num}
In this section, \ggy{we report the performance of our proposed formulations and algorithms for the MISO system. More specifically, we test the performance of} the proposed tight linear programming formulation, the iterative algorithms, and their complementary parts. \ggy{Each MISO instance includes over }$1,100$ generators with their corresponding bidding information provided. \revise{Among all of the generators, the percentages of $\G_1$, $\G_2$, $\G_3$, and $\G_4$ are around $50\%$, $1\%$, $8\%$, and $41\%$}. \ggy{We randomly select $11$ instances each with \yu{a} $36$-hour operation planning horizon and compute} the corresponding convex hull price and uplift payment. All test instances were run on a dual $8$-processor Intel(R) Xeon(R) CPU E$5$-$2667$ v$4$ @ $3.20$GHz $512$GB with Gurobi $8.0.1$ as the optimization solver. The default MIP optimality gap is set to be 1e-3.
	
	\ggy{We report the} results for the following models and algorithms:
	\begin{itemize}
		\item MIP: the traditional $3$-bin \ggy{MILP} UC model~\eqref{eq:1};
		\item LMP: \ggy{energy price is calculated based on the }dual vector corresponding to the system-wide constraints in the UCED problem when commitment statuses are fixed at their optimal values;
		\item TLP: use the approximated convex hull pricing formulation ($P1$) to obtain the price;
		\item IA1, IA2: approximated convex hull pricing formulation using the algorithms described in Section~\ref{sec:algorithm} to obtain the price;
		\item IAC1, IAC2: approximated convex hull pricing formulation using the iterative algorithm plus its complementary algorithm, \ggy{implemented in a single processor.} The setting for the complementary algorithm is as follows: 1) $M = |\G_4/\Gamma|$, 2) $n=2$, and 3) $T_{\text{limit}}=200s$;
		\item {\serevise{OPT: use the exact convex hull pricing formulation ($P$) to obtain the convex hull price}}.
	\end{itemize}
	
\ggy{We tested the cases with and without transmission constraints, respectively. The results for the cases without transmission constraints are reported in Table~\ref{tab:miso}. In the table, column ``Solution (\$)" represents the optimal cost for the system optimization model, with ``MIP" representing the optimal objective for the MILP UC model~\eqref{eq:1} and others representing the optimal objective for the corresponding linear programming relaxation, column ``Uplift payment (\$)" represents the total uplift payment generated from each approach, column ``Time (s)" represents the computational time in terms of seconds for each approach. Since IAC1 and IAC2 are complementary to IA1 and IA2, we only report additional time required to reach the optimal solution. That is, we use }``$\diamond$" to indicate that IAC1 or IAC2 stops after the stopping rule is satisfied, and ``(+$\delta$)" after ``$\diamond$" to indicate that the complementary algorithm takes $\delta$ extra seconds after the iterative algorithm terminates to reach the optimal solution. ``(+0)" shows that the result of IA1 or IA2 is already optimal. \ggy{Column ``Save (\$)" represents the uplift payment savings as compared to the ``LMP" approach}. The savings of the TLP approach are calculated as the difference of uplift payments between the TLP approach and the LMP approach. Since IA1, IA2, IAC1, and IAC2 tighten the TLP model, the savings of these four methods are represented as the extra savings beyond the TLP approach. For example, in case ``C1", the TLP approach can save \ggy{$\$3,093$} as compared to the LMP approach. The IA1 approach can save extra $\$299$, which means the IA1 approach can save $\$3,093+\$299=\$3,392$ in total as compared to the LMP approach. The ``OPT" provides a \serevise{benchmark} to indicate the efficiency of each approach. \serevise{For instance, our proposed algorithms achieve an exact convex hull price and a minimum uplift payment if the uplift payment derived from our approach is the same as that from the ``OPT" algorithm.} \ggy{Column ``Optimal" indicates if an exact convex hull price  and a minimum uplift payment are achieved by each approach.} \serevise{Finally, column ``Diff (\$/MWh)" represents the average absolute deviation between the exact convex hull and LMP prices. For the cases without transmission constraints, the ``Diff  (\$/MWh)" is calculated as ${\sum_{t\in \T} |\bar\gamma_t-\beta_t|}/{|\T|}$ with the convex hull prices $\bar\gamma_t$ and LMP prices $\beta_t$, $\forall t \in \T$.}

\begin{table}[htbp]
	\centering
	\caption{\textcolor{black}{Test results for MISO without transmission constraints}}
	\scalebox{0.75}{\begin{tabular}{c|cccccc|c}
			\toprule
			Case  & Model & \tabincell{c}{Solution\\ (\$)} & \tabincell{c}{Uplift Payment \\(\$)} & \tabincell{c}{Time \\(s)} & \tabincell{c}{Save\\ (\$)} & Optimal & \serevise{\tabincell{c}{Diff \\ (\$/MWh) }} \\
\midrule
\multirow{8}[1]{*}{C1} 	&	 MIP   	&	39,986,855	&	-	&	 -     	&	 -     	&	 -  & \multirow{8}[1]{*}{\serevise{0.25}}\\
&	 LMP   	&	-	&	3,521	&	36	&	 -     	&	 N \\
& TLP & 39,980,331 & 428 & 15 & 3,093 & N \\
& IA1 & 39,986,726 & 129 & 241 & +299 & Y \\
& IA2 & 39,986,726 & 129 & 247 & +299 & Y \\
&	 IAC1  	&	39,986,726	&	129	&	 $\diamond$(+0)    	&	+299	&	 Y \\
&	 IAC2  	&	39,986,726	&	129	&	 $\diamond$(+0)     	&	+299	&	 Y \\
&	 OPT   	&	39,986,726	&	129	& \revise{11,214} 	     	&	+299	&	 $\star$ \\
\hline												
\multirow{8}[0]{*}{C2} 	&	 MIP   	&	55,311,277	&	-	&	 -     	&	 -     	&	 - & \multirow{8}[1]{*}{\serevise{0.83}}\\
&	 LMP   	&	-	&	13,242	&	35	&	 -     	&	 N \\
& TLP & 55,291,978 & 3,187 & 15 & 10,055 & N \\
& IA1 & 55,309,361 & 1,916 & 177 & +1,271 & Y \\
& IA2 & 55,309,361 & 1,916 & 194 & +1,271 & Y \\
&	 IAC1  	&	55,309,361	&	1,916	&	 $\diamond$(+0)     	&	+1,271	&	 Y \\
&	 IAC2  	&	55,309,361	&	1,916	&	 $\diamond$(+0)     	&	+1,271	&	 Y \\
&	 OPT   	&	55,309,361	&	1,916	&\revise{10,445} 	      	&	+1,271	&	$\star$  \\
\hline												
\multirow{8}[0]{*}{C3} 	&	 MIP   	&	69,299,295	&	-	&	 -     	&	 -     	&	 - & \multirow{8}[1]{*}{\serevise{0.99}}\\
&	 LMP   	&	-	&	18,114	&	35	&	 -     	&	 N \\
& TLP & 69,290,109 & 2,890 & 15 & 15,224 & N \\
& IA1 & 69,297,643 & 1,652 & 240 & +1,238 & Y \\
& IA2 & 69,297,517 & 1,778 & 132 & +1,112 & N \\
&	 IAC1  	&	69,297,643	&	1,652	&	 $\diamond$(+0)     	&	+1,238	&	 Y \\
&	 IAC2  	&	69,297,643	&	1,652	&	 $\diamond$(+160)    	&	+1,238	&	 Y \\
&	 OPT   	&	69,297,643	&	1,652	&	\revise{18,975}     	&	+1,238	&	 $\star$ \\
\hline												
\multirow{8}[0]{*}{C4} 	&	 MIP   	&	50,763,422	&	-	&	 -     	&	 -     	&	 - & \multirow{8}[1]{*}{\serevise{0.52}}\\
&	 LMP   	&	-	&	8,948	&	34	&	 -     	&	 N \\
& TLP & 50,754,612 & 2,033 & 18 & 6,915 & N \\
& IA1 & 50,762,469 & 953 & 431 & +1,080 & Y \\
& IA2 & 50,762,469 & 953 & 390 & +1,080 & Y \\
&	 IAC1  	&	50,762,469	&	953	&	 $\diamond$(+0)     	&	+1,080	&	 Y \\
&	 IAC2  	&	50,762,469	&	953	&	 $\diamond$(+0)     	&	+1,080	&	 Y \\
&	 OPT   	&	50,762,469	&	953	&	\revise{15,185}      	&	+1,080	&	$\star$  \\
\hline												
\multirow{8}[0]{*}{C5} 	&	 MIP   	&	49,946,355	&	-	&	 -     	&	 -     	&	 - & \multirow{8}[1]{*}{\serevise{0.34}}\\ 
&	 LMP   	&	-	&	7,637	&	35	&	 -     	&	 N \\
& TLP & 49,861,587 & 6,046 & 16 & 1,591 & N \\
& IA1 & 49,945,716 & 639 & 493 & +5,407 & Y \\
& IA2 & 49,945,716 & 639 & 769 & +5,407 & Y  \\
&	 IAC1  	&	49,945,716	&	639	&	 $\diamond$(+0)     	&	+5,407	&	 Y \\
&	 IAC2  	&	49,945,716	&	639	&	 $\diamond$(+0)     	&	+5,407	&	 Y \\
&	 OPT   	&	49,945,716	&	639	&\revise{16,260} 	     	&	+5,407	&	$\star$  \\
\hline												
\multirow{8}[0]{*}{C6} 	&	 MIP   	&	58,880,776	&	-	&	 -     	&	 -     	&	 - & \multirow{8}[1]{*}{\serevise{0.76}}\\ 
&	 LMP   	&	-	&	36,630	&	38	&	 -     	&	 N \\
& TLP & 58,861,963 & 3,889 & 15 & 32,741 & N  \\
& IA1 & 58,880,049 & 727 & 475 & +3,162 & Y \\
& IA2 & 58,880,049 & 727 & 597 & +3,162 & Y \\
&	 IAC1  	&	58,880,049	&	727	&	 $\diamond$(+0)   	&	+3,162	&	 Y \\
&	 IAC2  	&	58,880,049	&	727	&	 $\diamond$(+0)    	&	+3,162	&	 Y \\
&	 OPT   	&	58,880,049	&	727	&	\revise{16,098}       	&	+3,162	&	 $\star$ \\
\hline												
\multirow{8}[0]{*}{C7} 	&	 MIP   	&	57,307,363	&	-	&	 -     	&	 -     	&	 - & \multirow{8}[1]{*}{\serevise{1.01}}\\ 
&	 LMP   	&	-	&	51,424	&	37	&	 -     	&	 N \\
& TLP & 57,288,519 & 1,832 & 14 & 49,592 & N \\
& IA1 & 57,306,102 & 1,261 & 584 & +571 & N \\
& IA2 & 57,306,120 & 1,243 & 560 & +589 & Y \\
&	 IAC1  	&	57,306,120	&	1,243	&	 $\diamond$(+38)   	&	+589	&	 Y \\
&	 IAC2  	&	57,306,120	&	1,243	&	 $\diamond$(+0)     	&	+589	&	 Y \\
&	 OPT   	&	57,306,120	&	1,243	&	\revise{11,935}       	&	+589	&	 $\star$  \\
\hline												
\multirow{8}[0]{*}{C8} 	&	 MIP   	&	69,977,540	&	-	&	 -     	&	 -     	&	 - & \multirow{8}[1]{*}{\serevise{0.37}}\\ 
&	 LMP   	&	-	&	11,472	&	35	&	 -     	&	 N \\
& TLP & 69,950,201 & 5,230 & 17 & 6,242 & N \\
& IA1 & 69,977,111 & 429 & 758 & +4,801 & Y \\
& IA2 & 69,977,111 & 429 & 780 & +4,801 & Y \\
&	 IAC1  	&	69,977,111	&	429	&	 $\diamond$(+0)   	&	+4,801	&	 Y \\
&	 IAC2  	&	69,977,111	&	429	&	 $\diamond$(+0)     	&	+4,801	&	 Y \\
&	 OPT   	&	69,977,111	&	429	&	\revise{10,993}       	&	+4,801	&	 $\star$ \\
\hline												
\multirow{8}[1]{*}{C9} &		 MIP   	&	47,889,206	&	-	&	 -     	&	 -     	&	 - & \multirow{8}[1]{*}{\serevise{0.60}}\\ 
&	 LMP   	&	-	&	8,875	&	38	&	 -     	&	 N \\
& TLP & 47,860,497 & 4,042 & 14 & 4,833 & N \\ 
& IA1 & 47,887,537 & 1,669 & 188 & +2,373 & Y \\
& IA2 & 47,887,537 & 1,669 & 234 & +2,373 & Y \\
&	 IAC1  	&	47,887,537	&	1,669	&	 $\diamond$(+0)     	&	+2,373	&	 Y \\
&	 IAC2  	&	47,887,537	&	1,669	&	 $\diamond$(+0)    	&	+2,373	&	 Y \\
&	 OPT   	&	47,887,537	&	1,669	&	\revise{10,363}     	&	+2,373	&	 $\star$ \\
\hline												
\multirow{8}[0]{*}{C10} 	&	 MIP   	&	59,195,531	&	-	&	 -     	&	 -     	&	 - & \multirow{8}[1]{*}{\serevise{0.68}}\\ 
&	 LMP   	&	-	&	11,613	&	36	&	 -     	&	 N \\
& TLP & 59,193,235 & 1,899 & 13 & 9,714 & N \\
& IA1 & 59,194,229 & 1,302 & 108 & +597 & Y \\
& IA2 & 59,194,229 & 1,302 & 115 & +597 & Y \\
&	 IAC1  	&	59,194,229	&	1,302	&	 $\diamond$(+0)     	&	+597	&	 Y \\
&	 IAC2  	&	59,194,229	&	1,302	&	 $\diamond$(+0)     	&	+597	&	 Y \\
&	 OPT   	&	59,194,229	&	1,302	&\revise{9,584} 	     	&	+597	&	 $\star$ \\
\hline												
\multirow{8}[0]{*}{C11} 	&	 MIP   	&	49,628,808	&	-	&	 -     	&	 -     	&	 - & \multirow{8}[1]{*}{\serevise{0.38}}\\ 
&	 LMP   	&	-	&	9,918	&	38	&	 -     	&	 N \\
& TLP & 49,620,385 & 1,448 & 17 & 8,470 & N \\
& IA1 & 49,627,991 & 817 & 372 & +631 & Y \\
& IA2 & 49,627,991 & 817 & 115 & +631 & Y \\
&	 IAC1  	&	49,627,991	&	817	&	 $\diamond$(+0)     	&	+631	&	 Y \\
&	 IAC2  	&	49,627,991	&	817	&	 $\diamond$(+0)     	&	+631	&	 Y \\
&	 OPT   	&	49,627,991	&	817	&\revise{16,269} 	     	&	+631	&	 $\star$ \\			
			\bottomrule
	\end{tabular}}%
	\label{tab:miso}%
\end{table}%

From Table~\ref{tab:miso}, we can first observe that the TLP approach can effectively reduce the uplift payments with a \yu{shorter} computational time as compared to the LMP approach. The price can be obtained without solving the UC problem. \ggy{Second, we can observe that our proposed alternative algorithms IA1 and IA2 are very effective. They provide extra savings based on the TLP approach. Both IA1 and IA2 can solve nine out of the eleven total cases into optimality. For the remaining two cases, at least one of IA1 and IA2 can achieve {the minimum uplift payment and the exact convex hull price}. In practice, ISOs can run both IA1 and IA2 to improve the chance to obtain the exact convex hull price. More importantly, the algorithm ran very fast and can terminate within $13$ minutes for all cases. Third,} for the cases whose optimal values are not reached by IA1 (or IA2), the complementary algorithm can effectively find out the ``key" generators and add them into IA1 (or IA2) to ensure the optimality. Among the eleven cases at most two generators are added after terminating IA1 (or IA2), which also verified the compactness of IA1 and IA2. What is more, the stopping rules for IAC1 and IAC2 \ggy{can be flexible in terms of adding extra generators into the formulation,} which can provide ISOs the flexibility to get a proper price within a time limit. \serevise{Finally, we can observe that the average absolute deviations between the convex hull prices and LMP prices range from $.25$ to $1.01$.}

\ggy{The results for the cases with transmission constraints (\revise{C10 and C11}) are reported in Table~\ref{table:trans}. \serevise{For the cases with transmission constraints, the price is a vector with dimension $|\T| \times |\mathcal{B}|$, where $|\mathcal{B}|$ is the number of buses in the MISO system, which is over $550$. In column ``Diff (\$/MWh)," we report the average absolute deviation ${\sum_{t\in \T}\sum_{b \in \mathcal{B}} |\bar\gamma_{tb}-\beta_{tb}|}/({|\T|\times|\mathcal{B}|})$, where} $\bar\gamma_{tb}$ and $\beta_{tb}$ are convex hull prices and LMP prices at time $t \in \T$ in bus $b \in \mathcal{B}$. The results are also very promising\yu{,} and similar {observations are obtained}. From Table~\ref{table:trans}, we can observe that both IA1 and IA2 achieve {the minimum uplift payment} and exact convex hull price for the first case. For the second case, IA2 achieves the minimum uplift payment\yu{,} and IA1 achieves {it} with the help of the complementary algorithm. \revise{Compared with the results for Cases C10 and C11 without transmission constraints,} the savings for the cases with transmission constraints are relatively larger, \serevise{and the differences between the exact convex hull prices and LMP prices are also relatively larger}. Meanwhile, it does not take much longer to solve the cases with transmission constraints with all finished in $20$ minutes.}  

\revise{Finally, from Tables~\ref{tab:miso} and~\ref{table:trans}, by comparing the solving time of our proposed algorithms with the OPT approach, we can observe that our proposed algorithms are much more computationally efficient.}

\begin{table}[htbp]
	\centering
	\caption{\textcolor{black}{Test results for MISO with transmission constraints}}\label{table:trans}
	\scalebox{0.75}{\begin{tabular}{c|cccccc|c}
			\toprule
			Case  & Model & \tabincell{c}{Solution\\ (\$)} & \tabincell{c}{Uplift Payment\\ (\$)} & \tabincell{c}{Time\\ (s)} & \tabincell{c}{Save\\ (\$)} & Optimal & \serevise{\tabincell{c}{Diff\\ (\$/MWh)}}\\
			\midrule
			\multirow{8}[1]{*}{C10(T)} & MIP & 61,717,153 & - & 584 & - & - &\multirow{8}[1]{*}{\serevise{3.49}}\\  
			& LMP & - & 1,667,967 & 68 & - & N \\
			& TLP & 61,596,521 & 92,541 & 69 & 1,575,426 & N \\
			& IA1 & 61,602,290 & 87,824 & 1,182 & +4,717 & Y \\
			& IA2 & 61,602,290 & 87,824 & 1,240 & +4,717 & Y \\
			& IAC1 & 61,602,290 & 87,824 & $\diamond$(+0) & +4,717 & Y \\
			& IAC2 & 61,602,290 & 87,824 & $\diamond$(+0) & +4,717 & Y \\
			& OPT & 61,602,290 & 87,824 & \revise{81,630} & +4,717 & $\star$ \\
			\hline												
			\multirow{8}[0]{*}{C11(T)} 	& MIP & 50,071,094 & - & 271 & - & - &\multirow{8}[0]{*}{\serevise{2.19}}\\  
			& LMP & - & 476,190 & 58 & - & N \\
			& TLP & 50,020,529 & 24,538 & 41 & 451,652 & N \\
			& IA1 & 50,030,415 & 23,498 & 512 & +1,041 & N \\
			& IA2 & 50,030,417 & 23,495 & 626 & +1,044 & Y \\
			& IAC1 & 50,030,417 & 23,495 & $\diamond$(+39) & +1,044 & Y \\
			& IAC2 & 50,030,417 & 23,495 & $\diamond$(+0) & +1,044 & Y \\
			& OPT & 50,030,417 & 23,495 & \revise{31,857} & +1,044 & $\star$ \\
			\bottomrule
	\end{tabular}}%
	\label{tab:misotrans}%
\end{table}%

\section{Conclusion} \label{sec:conclusion}
\revise{This paper has provided a customized compact formulation, efficient iterative algorithms, and implementation techniques to solve the convex hull pricing problem for MISO instances. To capture the real-world generator characteristics, we categorize four groups of generators and develop two new convex hull descriptions, in which the EUC formulation provides the convex hull description for the generators with most complicated physical and operational restrictions. More importantly, to tackle the computational complexity issues brought by the large-scale MISO system, iterative algorithms with convergence properties have been proposed. The iterative algorithms innovatively take advantage of the uplift payment calculation process and the cutting plane technology, which can achieve a high-quality solution in a short time. The test results verified the efficiency of our method in both cost savings and solving time reduction. In the next step, we will explore other ways to solve this large-scale convex hull pricing problem, such as decomposition approaches described in~\cite{2014zhao}.}

\section*{Acknowledgements}
\revise{The authors thank the editor and referees for their sincere suggestions on improving the quality of this paper.}

\begin{appendices}
\section{\revise{Proof of Theorem~\ref{thm2}}} \label{app:1}
\begin{proof}
{In this proof, we show that EUC formulation is the dual formulation of a primal dynamic programming formulation (similar to~\cite{guan2018polynomial}). Here, we let $T=|\mathcal{T}|$. First, for the dynamic programming algorithm, we define two value functions $V_\downarrow(t)$ and $V_\uparrow(t)$ for each time period, in which $V_\downarrow (t)$ represents the total net cost (i.e., the total cost minus the revenue) from time $t$ to the end when the generator shuts down at time $t +1$, and $ V_\uparrow (t) $ represents the total net cost from time $t$ to the end when the generator starts up at time $t$. {Basically, we use $V_\downarrow(t)$ and $V_\uparrow(t)$ to keep track of the starting points of the ``OFF" and ``ON" intervals. }}
{Accordingly, we let $C(t, k)$ represent the net cost for the ``ON'' period $[t, k]_\Z$, where $k \geq t$. The dynamic programming equations are as follows:}
	\begin{subeqnarray} \label{new_dp1}
		& V_\downarrow (t) = & \hspace{-0.05in}\min_{ k \in [t+\ell_t+1,T]_{\Z} } \hspace{-0.03in} \{ \textcolor{black}{S'}(k-t-1)+  V_\uparrow (k),  0 \}, \nonumber\\
		&& \hspace{1in} \forall t \in [t_0, T-\ell_t -1]_{\Z},  \slabel{eqn:Ndp1} \\
		&V_\downarrow (t) = &0, \ \forall t \in [T-\ell_t, T-1]_{\Z}, \slabel{eqn:Ndp1p}\\
		&V_\uparrow (t) = & \hspace{-0.2in}\min_{  \begin{tiny}\substack{\tiny k \in [t+L_t-1, \\ \min\{t+\overline{L}-1,T-1\}]_{\Z} }\end{tiny}  }\hspace{-0.3in} \{ \textcolor{black}{S}(k \hspace{-0.04in}- \hspace{-0.04in}t\hspace{-0.04in}+\hspace{-0.04in}1) \hspace{-0.05in}+ \hspace{-0.05in} C(t,k)\hspace{-0.04in} +\hspace{-0.04in}  V_\downarrow (k), \hat C(t,T) \},\nonumber\\
		&&  \quad \qquad  \qquad \forall t \in [t_0+\ell_t+1, T-L_t]_{\Z},  \slabel{eqn:Ndp2} \\
		&V_\uparrow (t) = &\hspace{-0.13in}C(t,T), \forall t\hspace{-0.03in} \in \hspace{-0.03in} [\max\{T\hspace{-0.03in}-\hspace{-0.03in}L_t\hspace{-0.03in}+\hspace{-0.03in}1,\hspace{-0.03in} T\hspace{-0.03in}-\hspace{-0.03in}\overline{L}\hspace{-0.03in}+\hspace{-0.03in}1\},\hspace{-0.03in} T]_{\Z}{,} \slabel{eqn:Ndp3}
	\end{subeqnarray}
{where} equations \eqref{eqn:Ndp1} represent that when the {generator} shuts down at time $t+1$, it can either start up again at time $k$ with $k-t-1 \geq \ell_t $ (satisfying min-down time limit) and $k \leq T$ or keep offline until the end. Equations \eqref{eqn:Ndp2} indicate that when the {generator} starts up at time $t$, it can either keep online until time $k$ and shut-down at time $k+1$ with $L_t \leq k-t+1 \leq \overline{L}$ (satisfying min-up and max-up time limit) and $k \leq T-1$ or keep online until the end while satisfying the max-up time limit ($\hat C(t,T)$ represents the net cost if $ t \geq T- \overline{L} +1$, and $+\inf$, otherwise). Equations~\eqref{eqn:Ndp1p} and~\eqref{eqn:Ndp3} describe the backward initial conditions. The objective is
	\begin{eqnarray} \label{new_dp2}
	\Phi \hspace{-0.03in}=\hspace{-0.03in} V_\uparrow\hspace{-0.02in} (\hspace{-0.02in}{0}\hspace{-0.02in})\hspace{-0.03in} := \hspace{-0.45in} \min_{ \tiny t \in [t_0, \min\{\overline{L}\hspace{-0.03in}-\hspace{-0.03in}s_0,T\hspace{-0.03in}-\hspace{-0.03in}1\}]_{\Z}} \hspace{-0.1in}\Big\{\textcolor{black}{S}({t\hspace{-0.03in}+\hspace{-0.03in}s_0}) \hspace{-0.03in}+ \hspace{-0.03in}C(1,t)\hspace{-0.03in}+\hspace{-0.03in} V_\downarrow (t),\hspace{-0.03in} \widetilde{C}(1,T)\Big\},
	\end{eqnarray}
	where $\widetilde{C}(1,T)$ represents the net cost if $\overline{L}-s_0 \geq T$, and $+\inf$, otherwise.

This dynamic program can be written as the following equivalent formulation:
	\begin{subeqnarray} \label{model:LP}
		\hspace{-0.2in}&\max  &\Phi \slabel{eqn:LP_obj}\\
		\hspace{-0.2in}{(\textcolor{black}{w_t})} 	&\hspace{-0.3in}\mbox{s.t.} & \hspace{-0.3in}{\Phi} \leq \textcolor{black}{S}(t{+s_0})  + C(1,t)+ V_\downarrow (t),\nonumber\\
		&& \hspace{0.6in} \forall t \hspace{-0.03in}\in\hspace{-0.03in} [t_0,\min\{\overline{L}\hspace{-0.03in}-\hspace{-0.03in}s_0,T\hspace{-0.03in}-\hspace{-0.03in}1\}]_{\Z}, \slabel{eqn:LP1}\\
		\hspace{-0.2in}(\textcolor{black}{w_t}) & & \hspace{-0.3in} {\Phi} \leq C(1,T), \text{if } \overline{L}-s_0 \geq T,\slabel{eqn:LP2}\\
	\hspace{-0.2in}	{(\textcolor{black}{z_{kt}})}	&&\hspace{-0.3in} V_\downarrow (k) \leq \textcolor{black}{S'}(t-k-1) +  V_\uparrow (t), \nonumber\\
		\hspace{-0.2in}&&  \forall t \in [k\hspace{-0.03in}+\hspace{-0.03in}\ell_t\hspace{-0.03in} +\hspace{-0.03in}1,T]_{\Z}, \forall k \in [t_0, T\hspace{-0.05in}-\hspace{-0.03in}\ell_t \hspace{-0.03in}-\hspace{-0.05in}1]_{\Z}, \slabel{eqn:LP3}\\
	\hspace{-0.2in}	(\theta_t) && \hspace{-0.3in}V_\downarrow (t) \leq 0, \hspace{0.7in} \forall t \in [t_0, T-\ell_t -1]_{\Z},\slabel{eqn:LP3P}\\
	\hspace{-0.2in}	{(\theta_{t})}	&&\hspace{-0.3in} V_\downarrow (t) = 0, \hspace{0.7in} \forall t \in [T-\ell_t, {T-1}]_{\Z},\slabel{eqn:LP3pp}\\
	\hspace{-0.2in}	{(\textcolor{black}{y_{tk}})} 	&  &\hspace{-0.3in} V_\uparrow (t) \leq  \textcolor{black}{S}(k-t+1) +  C(t,k) +  V_\downarrow (k), \nonumber \\
	\hspace{-0.2in}	&& \hspace{0.2in}   \forall k \in [t\hspace{-0.04in}+\hspace{-0.04in}L_t\hspace{-0.04in}-\hspace{-0.04in}1,  \min\{t\hspace{-0.04in}+\hspace{-0.04in}\overline{L}\hspace{-0.04in}-\hspace{-0.04in}1,T\hspace{-0.04in}-\hspace{-0.04in}1\}]_{\Z}, \nonumber\\
	\hspace{-0.2in} && \hspace{0.88in} \forall t \in [t_0\hspace{-0.03in}+\hspace{-0.03in}\ell_t\hspace{-0.03in}+\hspace{-0.03in}1, T\hspace{-0.03in}-\hspace{-0.03in}L_t]_{\Z},  \slabel{eqn:LP2}\\
	\hspace{-0.2in}	{(\textcolor{black}{y}_{tT})} 	&&\hspace{-0.3in} V_\uparrow (t) \leq  C(t,T) ,\nonumber\\
	&& \hspace{0.05in}  \forall t \in [\max\{t_0\hspace{-0.03in}+\hspace{-0.03in}\ell_t\hspace{-0.03in}+\hspace{-0.03in}1,\hspace{-0.03in} T\hspace{-0.06in}-\hspace{-0.06in}\overline{L}\hspace{-0.03in}+\hspace{-0.03in}1\}, T\hspace{-0.03in}-\hspace{-0.03in}L_t]_{\Z}, \slabel{eqn:LP22P}\\	
	\hspace{-0.2in}	{(\textcolor{black}{y}_{tT})}	&& \hspace{-0.3in}V_\uparrow\hspace{-0.02in} (\hspace{-0.01in}t\hspace{-0.01in})\hspace{-0.03in} = \hspace{-0.03in} C\hspace{-0.02in}(t,\hspace{-0.02in}T)\hspace{-0.02in}, \hspace{-0.03in}\forall\hspace{-0.01in} t \hspace{-0.03in}\in\hspace{-0.03in} [\max\{T\hspace{-0.06in}-\hspace{-0.06in}L_t\hspace{-0.03in}+\hspace{-0.03in}1,\hspace{-0.03in} T\hspace{-0.06in}-\hspace{-0.06in}\overline{L}\hspace{-0.03in}+\hspace{-0.03in}1\},\hspace{-0.03in} T]_{\Z}{,} \slabel{eqn:LP4}
	\end{subeqnarray}
where in general	
	\begin{subeqnarray} \label{model:ED}
		\hspace{-0.2in}	&& C(t,k) = \min  \sum_{s=t}^{k}  \phi_s  \slabel{eqn:ED_obj} \\
		\hspace{-0.2in} 	\mbox{s.t.}&& \hspace{-0.2in} \pl_s \leq x_s \leq \pb_s, \ \forall s \in [t,k]_{\Z},  x_t \leq \overline{V}^u_t,  x_k \leq \overline{V}^e_k, \slabel{eqn:EDram2}\\
		\hspace{-0.2in}		&&\hspace{-0.2in} x_s - x_{s-1} \leq V_s^u, x_{s-1} \hspace{-0.05in}- x_s \leq V_s^e, \forall s \in [t+1,k]_{\Z}, \slabel{eqn:EDramdo}\\
		\hspace{-0.2in}		&& \hspace{-0.2in}\phi_s \geq a_j x_s + {b_j}, \ \forall s \in [t,k]_{\Z}, j\in  [1,N]_{\Z}, \slabel{eqn:EDlin}
	\end{subeqnarray}
where constraints~\eqref{eqn:EDram2} represent the capacity and start-up/shut-down restrictions, constraints~\eqref{eqn:EDramdo} represent the ramping constraints, and constraints~\eqref{eqn:EDlin} indicate that the objective function is represented by a piecewise linear function with $N$ pieces.  
	
By taking the dual of model~\eqref{model:ED}, we can remove the ``max'' sign and insert the dual formulation into model~\eqref{model:LP} and then obtain an overall dual linear program corresponding to the original dynamic program as follows with $C(t,k)$ as a linking decision variable between~\eqref{model:LP} and the dual of~\eqref{model:ED}: 
	\begin{subeqnarray} \label{model:LP_ED}
		\hspace{-0.3in}& \max  &  {\Phi} \slabel{eqn:LP_EDobj}\\
		\hspace{-0.3in}& \mbox{s.t.}  &  \hspace{-0.1in} \eqref{eqn:LP1} - \eqref{eqn:LP4}, \slabel{eqn:LP_ED1}\\
		\hspace{-0.3in}{(p_{tk})} \hspace{-0.2in}	& & \hspace{-0.1in} C(1,k) \leq M_1, \hspace{-0.03in} \forall k \hspace{-0.03in}\in\hspace{-0.03in} [t_0\hspace{-0.03in}+\hspace{-0.03in}1, \min\{\overline{L}\hspace{-0.03in}-\hspace{-0.03in}s_0,\hspace{-0.03in}T\hspace{-0.03in}-\hspace{-0.03in}1\}]_{\Z}, \slabel{eqn:LP_ED210}\\
		\hspace{-0.3in}{(p_{tk})} \hspace{-0.2in}	& & \hspace{-0.1in} C(t,k)\hspace{-0.04in} \leq\hspace{-0.04in} M_2,\hspace{-0.04in} \forall k\hspace{-0.04in} \in\hspace{-0.04in} [t\hspace{-0.04in}+\hspace{-0.04in}L_t\hspace{-0.04in}-\hspace{-0.04in}1,  \min\{t\hspace{-0.04in}+\hspace{-0.04in}\overline{L}\hspace{-0.04in}-\hspace{-0.04in}1,T\hspace{-0.04in}-\hspace{-0.04in}1\}]_{\Z}, \nonumber\\
		\hspace{-0.3in}&& \hspace{0.76in} \forall t \in [t_0\hspace{-0.03in}+\hspace{-0.03in}\ell_t\hspace{-0.03in}+\hspace{-0.03in}1, T\hspace{-0.03in}-\hspace{-0.03in}L_t]_{\Z}, \slabel{eqn:LP_ED2}\\
		\hspace{-0.3in}{(p_{tk})} \hspace{-0.2in}	&& \hspace{-0.1in}C(t,T)\hspace{-0.03in} \leq\hspace{-0.03in} M_3, \hspace{-0.03in}\forall t \hspace{-0.03in}\in\hspace{-0.03in} [\max\{t_0\hspace{-0.03in}+\hspace{-0.03in}\ell_t\hspace{-0.03in}+\hspace{-0.03in}1,\hspace{-0.03in} T\hspace{-0.03in}-\hspace{-0.03in}\overline{L}\hspace{-0.03in}+\hspace{-0.03in}1\},\hspace{-0.03in} T]_{\Z}, \slabel{eqn:LP_ED22}\\
		\hspace{-0.3in}{(p_{tk})} \hspace{-0.2in}	& & \hspace{-0.1in} C(1,T) \leq M_4,\text{ if } \overline{L}-s_0 \geq T, \slabel{eqn:LP_ED222}\\
		\hspace{-0.3in}&& \hspace{-0.1in} \text{Constraints in the dual formulation of}~\eqref{model:ED}, \slabel{eqn:LP_ED223}
	\end{subeqnarray}
	 where $M_1, M_2, M_3, M_4$ represent the dual objective function of the economic dispatch problem under four different cases based on the value of $t$ and $k$, with the detailed representation omitted for brevity. 
	 
	 

	By taking the dual of the derived linear program~\eqref{model:LP_ED} and cleaning the model, we can obtain model~\eqref{model:LP_D3} (EUC formulation) {as described in Section~\ref{subsub:4}}.

To prove that $\De_i^4$ (i.e., the feasible region for formulation~\eqref{model:LP_D3}) is a convex hull description for single UC, we need to show for any arbitrary linear objective function with it as a feasible region (denoted as formulation $LD$), there exists a corresponding optimal solution binary with respect to decision variables $w$, $y$, and $z$. To show this, based on the optimal solution obtained from the dynamic programming framework, we can construct a corresponding solution $(\textcolor{black}{w}^*, \textcolor{black}{y}^*, \textcolor{black}{z}^*, \theta^*,q^*, \phi^*)$ for $LD$, where ${w}^*$ represents the first shut-down, $y^*$ represents the ``ON" interval, ${z}^*$ represents the ``OFF'' interval, $\theta^*$ represents the final shut-down, and $q^*$ and $\phi^*$ represent the generation amount and cost in the ``ON'' interval, in the optimal schedule provided by the dynamic programming approach. It is easy to check $(\textcolor{black}{w}^*, \textcolor{black}{y}^*, \textcolor{black}{z}^*, \theta^*,q^*, \phi^*)$ is feasible for $LD$ since all physical constraints related to generation amount are derived from the economic dispatch problem~\eqref{model:ED} and the constraints related to binary variables are derived from the dynamic programming framework. Meanwhile, the objective value of the constructed solution to $LD$ is equal to the optimal value of the dynamic programming $V_{\uparrow}(0)$. Then, by the strong duality theorem, $(\textcolor{black}{w}^*, \textcolor{black}{y}^*, \textcolor{black}{z}^*, \theta^*,q^*, \phi^*)$, binary with respect to decision variables $w$, $z$, and $y$, is an optimal solution to $LD$. Thus, the conclusion holds. 
Also, there exists an optimal solution to~\eqref{model:LP_D3} binary with respect to decision variables $w, y,$ and $z$, since the objective function is linear and there exists an optimal solution at the extreme point. 
\end{proof}
\end{appendices}

	\bibliographystyle{IEEEtran}
	\bibliography{CC_Edge}
\end{document}